\def\eqref#1{equation~\ref{#1}}
\def\1{\bm{1}}
\DeclareMathAlphabet{\mathsfit}{\encodingdefault}{\sfdefault}{m}{sl}
\SetMathAlphabet{\mathsfit}{bold}{\encodingdefault}{\sfdefault}{bx}{n}
\def\sA{{\mathbb{A}}}
\DeclareMathOperator*{\argmin}{arg\,min}
\theoremstyle{plain}
\newtheorem{theorem}{Theorem}[section]
\newtheorem{proposition}[theorem]{Proposition}
\newtheorem{lemma}[theorem]{Lemma}
\newtheorem{claim}[theorem]{Claim}
\theoremstyle{definition}
\newtheorem{definition}[theorem]{Definition}
\newtheorem{assumption}[theorem]{Assumption}
\theoremstyle{remark}
\title{Understanding Model Selection for Learning in Strategic Environments}
\author{%
  Tinashe Handina \thanks{Correspondence to Tinashe Handina <thandina@caltech.edu>.} 
    \\
  Computing + Mathematical Sciences\\
  California Institute of Technology\\
  Pasadena, CA 91125 \\
  \texttt{thandina@caltech.edu} \\
  \And
  Eric Mazumdar \\
  Computing + Mathematical Sciences\\
  California Institute of Technology \\
  Pasadena, CA 91125 \\
  \texttt{mazumdar@caltech.edu} \\
}
\begin{document}

\maketitle

\begin{abstract}
  The deployment of ever-larger machine learning models reflects a growing consensus that the more expressive the model class one optimizes over---and the more data one has access to---the more one can improve performance. As models get deployed in a variety of real-world scenarios, they inevitably face strategic environments. In this work, we consider the natural question of how the interplay of models and strategic interactions affects the relationship between performance at equilibrium and the expressivity of model classes. We find that strategic interactions can break the conventional view---meaning that performance does not necessarily monotonically improve as model classes get larger or more expressive (even with infinite data).  We show the implications of this result in several contexts including strategic regression, strategic classification, and multi-agent reinforcement learning. In particular, we show that each of these settings admits a Braess' paradox-like phenomenon in which optimizing over less expressive model classes allows one to achieve strictly better equilibrium outcomes. Motivated by these examples, we then propose a new paradigm for model selection in games wherein an agent seeks to choose amongst different model classes to use as their action set in a game.
\end{abstract}

\section{Introduction}
\label{sec:introduction}
Machine learning---and deep learning in particular--- has already demonstrated enormous potential to enable new services across a wide spectrum of everyday life. Examples range from chatbots \cite{adamopoulou_chatbots_2020}, to hiring \cite{pessach_employees_2020}, and content moderation \cite{gorwa_moderation}. Driving this proliferation is the fact that the increasing availability of compute resources coupled with the abundance of data provided by internet-scale datasets allows one to train larger and larger models while monotonically improving performance \cite{kaplan_scaling_2020,bahri_explaining_2021,sharma_scaling_2022}. Despite signs of diminishing returns, this consensus has broadly held true. Machine learning algorithms tend to follow a monotonic scaling law: with more compute and data, one can train more expressive models and eke out performance gains.

As algorithms are deployed into real-world scenarios, however, they will inevitably come into contact with some form of strategic decision-making---whether that be in the form of adversarial agents attempting to manipulate the output of the algorithm \cite{madry_adversarial}, gig-workers taking actions to enforce better working conditions from learning-powered platforms \cite{Woodcock2020TheGE}, or more broadly individuals whose goals are misaligned with those of the algorithm~\cite{collective_action}.

Reflecting this reality, recent years have seen a surge in research seeking to understand the effects of strategic decision-making on learning algorithms. Two sub-fields in particular include \emph{strategic} classification~\cite{hardt16strategic}---in which one seeks to learn a classifier or predictor in the presence of agents who strategically manipulate data--- and multi-agent reinforcement learning~\cite{Zhang2021}--- in which agents attempt to learn optimal decision-making policies in the presence of other learning agents. Both of these domains draw on ideas from game theory and economics to understand how to design algorithms for strategic settings.

In these different regimes, a common refrain is that the presence of strategic interactions invalidates many of the foundational assumptions underlying many machine learning algorithms. For example, strategic interactions result in highly non-stationary environments for multi-agent reinforcement learning (MARL)~\cite{Zhang2021} and seemingly innocuous design decisions like stepsize choices and gradient estimators have been shown to give rise to qualitatively different outcomes in strategic classification~\cite{zrinc21follow}. Despite these works, our understanding of model class selection in strategic settings remains underdeveloped. To that end, in this paper, we consider the following question:

\begin{quote}
\centering 
    {\em How do strategic interactions affect the relationship between model class expressivity and equilibrium performance?}
\end{quote}

\paragraph{Contributions:}We show through simple theoretical models, illustrative examples, and experiments that strategic interactions can yield a non-trivial relationship between model class expressivity and equilibrium performance. In particular, we show how---even in highly structured regimes in which one has full access to the underlying data distribution---strategic interactions can result in a Braess' paradox-like phenomenon: the larger and more expressive the model class a learner optimizes over, the lower their performance at equilibrium. 

To understand why this is possible, we make links with the literature in economics on comparative statics, which seeks to understand how the equilibria of games vary with exogenous factors. We show that \emph{even} in convex games with a unique equilibrium, if the equilibrium is \emph{not} Pareto optimal (i.e., there exists coordinated deviations that improve the utilities of both players), then there always exists a \emph{unilateral} restriction of one's action set over which one could have played and had a better equilibrium outcome. Conversely, we show that if an equilibrium is Pareto optimal (which encompasses not only traditional optimization but also adversarial games), performance at equilibrium will tend to scale monotonically with respect to model class expressivity. To make this result concrete, we give examples of strategic regression, strategic classification, and MARL in which \emph{reverse} scaling occurs.

Our result suggests that---if the model will be deployed into a strategic environment--- the choice of model class should be treated as a strategic action. Following up on this observation, we formulate a problem of model-selection in games. Whereas learning in games traditionally takes the action set for a player as given, we propose a new formulation in which a player has a number of action sets to choose from and must find the one that yields the best payoff. As a proof-of-concept, we provide an algorithm to identify the best set in a class of structured games.

\subsection{Related work}
Before describing our model and results, we comment on related work in both machine learning and economics.

\textbf{Scaling laws in Machine Learning:} Within statistical machine learning, the study of scaling laws is motivated by the task of choosing a sufficiently expressive class of models to optimize over a given dataset~\cite{vapnik95}. Non-monotonicity of scaling laws emerged classically due to overfitting~\cite{rademacher}---in which the model is too expressive relative to the size of the data set, which can degrade performance at deployment. Such problems are fundamentally linked to understanding the behavior of the empirical risk as one optimized over larger and larger model classes~\cite{VCdim}.  

Deep learning upended this way of thinking with the development of a theory for generalization beyond what is called the threshold of interpolation\cite{nakkiran_deep_2019}. More recently, work has investigated how the performance of large language models scales with expressivity (measured in the number of parameters) \cite{ kaplan_scaling_2020,bahri_explaining_2021, sharma_scaling_2022} and the size of the dataset it is trained on~\cite{hoffmann_training_2022}. In each of these works, the scaling laws increase monotonically in both the amount of data and expressivity.

In our paper, we sidestep issues of sample complexity (i.e., dataset size) to isolate the interplay between expressivity and strategic interactions. While the minimum of the population risk for supervised learning monotonically decreases because optimizing over a larger space can only improve performance, we show that the population risk is non-monotonic in strategic environments. Thus, the phenomenon that we highlight holds even without consideration of sample sizes or generalization errors and adds to a growing body of literature on the difficulties of learning in game theoretic environments.

\textbf{Learning in strategic environments:} Recent years have seen a surge of interest in understanding the effects of strategic interactions on learning algorithms. Some of the most relevant areas of interest are strategic classification~\cite{hardt16strategic} and performative prediction~\cite{perdomo}, adversarial machine learning~\cite{madry_adversarial}, and multi-agent reinforcement learning~\cite{Zhang2021}. A unifying theme across these areas is the integration of ideas from game theory into problems of machine learning, wherein one seeks to learn an optimal model given the presence of strategic agents who may themselves be learning. For example, papers on strategic classification~\cite{dong_stat}, strategic regression~\cite{ben-porat_best_2017}, and participation dynamics~\cite{dean2023emergent,conger2023strategic} all analyze games in which a learner deploys learning algorithms in game-theoretic environments. Similarly, work on MARL naturally builds upon the foundation of Markov games ~\cite{shapley1953stochastic,littman1994markov}. 

One can view all of these problems as an instance of learning in games~\cite{fudenberg}---which has seen a resurgence in the machine learning literature in recent years due to these connections~\cite{mert,mazumdar}. In this paper, we adopt, in particular, the framework of continuous games~\cite{Rosen} in which players' action sets can be compact convex subsets of $\mathbb{R}^n$.

In this class of games, recent work has made clear that learning can be much more complex than in stationary environments--- with non-trivial consequences including instability and convergence to cycles and chaos when using gradient-based learning~\cite{cycles}, small design choices like stepsizes and gradient estimators leading to different equilibria~\cite{zrinc21follow}, and strategic manipulations allowing for better causal discovery~\cite{gaminghelps}. 

The question of whether our current understanding of scaling laws holds in these environments is still relatively understudied. Recent empirical work has shown that scaling laws in zero-sum MARL mirror those for deep reinforcement learning and deep learning more generally~\cite{MARLscaling}. Most relevant to our work is a recent paper that studied the non-monotonicity of users' social welfare as firms deploy larger and larger models~\cite{jagadeesan_improved_2023}. The paper considers an environment in which multiple firms compete over a set of users and analyzes the welfare of the users as all firms choose more complex models. They show through a simple model and extensive experiments that if all firms use larger models, the users' welfare can decrease. In this paper, we formulate a more general model that encompasses their interaction and more general problems of strategic classification and MARL. We take an orthogonal track, which is to ask whether it is rational for self-interested learners to \emph{unilaterally} restrict the expressivity of their models in strategic settings. We show that this is indeed the case under certain conditions.

\textbf{Changing equilibrium outcomes in game theory:}
Finally, we would be remiss if we did not discuss the large body of work in economics that studies changes in equilibrium outcomes in games. A similar phenomenon to the one we highlight is the well-known Braess' paradox in strategic routing~\cite{braess_uber_1968} in which one can add a road to a network and increase congestion. Even more related is the \emph{informational} Braess' paradox \cite{acemoglu_informational_2018} in which more information over the network can yield worse equilibrium outcomes for agents in routing games. Many classic works in dynamic game theory have also highlighted the unintuitive ways information and statistical estimation affect equilibrium outcomes in games~\cite{basar_information,basarbook}.

More generally, a large body of work in economics studies comparative statics---i.e., how equilibrium payoffs change as exogenous variables are changed~\cite{topkis_supermodularity_1998}. The literature has mostly been concerned with deriving conditions under which payoffs change monotonically in the exogenous variables, a field known as a \emph{monotone} comparative statics~\cite{athey_monotone_2002}. Our work can be seen as an attempt to understand these ideas in the context of strategic machine learning.

\section{Preliminaries}
\label{sec:preliminaries}

To understand the dependencies between strategic decision-making and model complexity, we examine different strategic environments. In our model, the learner has access to an ordered set of model classes $\sA$, which are all subsets of one large class $\Omega$. 

\begin{definition}
    A set of model classes $\sA = \{\Theta_k\}_{k = 1}^{N}$ is \emph{ordered} if for all $ \Theta_i, \Theta_j \in \sA$, if $i < j$ it implies that $\Theta_i \subseteq \Theta_j$ 
\end{definition}

The set $\sA$ may be a set of nested policy classes in MARL or a set of neural network architectures of increasing size for strategic classification. Importantly, the model classes have monotonically increasing expressivity when measured in classic notions of expressivity like V-C dimension~\cite{VCdim}.

Before engaging with the strategic environment, the learner chooses a model class $\Theta_i \in \sA$ over which to optimize. In some instances, we refer to model classes as action spaces, and we use these two terms interchangeably. The selection of a model class fixes the optimization problem the learner will then attempt to solve through interactions with the environment. We model interactions with the environment as a two-player game and assume that players find equilibrium outcomes. We assume the learner has a loss function $f_l:\Omega \times \mathcal{E} \rightarrow \mathbb{R}$ which they seek to minimize that also depends on the action of the environment. Similarly, the environment will have a loss function $f_e:\Omega \times \mathcal{E} \rightarrow \mathbb{R}$. Here $\Omega$ is the learner's action space whilst $\mathcal{E}$ is the environment's action space. 

To model different strategic interactions, we make different assumptions on the nature of the game played and the equilibrium outcomes. We focus on four types of strategic environments: Stationary Environments where the environment actor only has a single action, Stackelberg Environments where the Learner leads, Stackelberg Environments where the learner follows, and General Nash Environments. We provide a concrete description of each of these environments in Appendix ~\ref{app:env_description}.
 
In the next section, we analyze General Nash environments and show that payoffs do not necessarily monotonically increase in expressivity. In Appendix \ref{App: Stationary}, we provide a set of theoretical results that show how equilibrium payoffs \emph{are} monotonically increasing in expressivity in Stationary and Stackelberg environments where the learner leads. We then show in Appendix \ref{app:learner_follows} how payoffs also do not necessarily monotonically increase in expressivity in Stackelberg environments where the learner follows.

\section{Non-Monotonic Scaling of Performance in Nash Settings}
\label{sec:theory}
In this section, we present our investigation of the relationship between model class expressivity and equilibrium performance in Nash setting. We show how \emph{even} under strong assumptions on the regularity of the game, there always exists a way for a player to restrict their model class, resulting in a game with a Nash equilibrium that has a lower loss if the original Nash equilibrium is not Pareto optimal (i.e., the game is not zero-sum or strategically zero-sum). We note that this is a negative result which is an existence proof. To concretely establish this phenomenon, we illustrate through examples in multi-agent reinforcement learning and strategic classification how in settings where these assumptions are relaxed, this phenomenon still exhibits itself.

To prove our main result, we assume the two-player game is strongly monotone on the space $\Omega \times \mathcal{E} \subset \mathbb{R}^n$.
\begin{definition}
\label{def: monotone}
    A two-player game is $\mu$-strongly monotone if the generalized gradient operator $F:\Omega \times \mathcal{E}\rightarrow \mathbb{R}^n$ given by:
    \[F(x)=\begin{bmatrix}
          \nabla_\theta f_l(\theta,e)\\\nabla_e f_e(\theta,e)\end{bmatrix} \ \ \text{where: $x=(\theta,e)$},
\]
    satisfies:
    \[\langle F(x)-F(x'),x-x'\rangle\ge \mu \|x-x'\|^2 \ \ \forall \ x,x' \ \in \  \Omega \times \mathcal{E}\]
    
\end{definition}
A strongly monotone game is a convex game~\cite{Rosen}. Implicitly, it assumes that the two players' losses are $\mu$ strongly convex in their own action and makes a further assumption on the interaction between players' actions \cite{mazumdar}. The assumption of strong monotonicity ensures that there is always a unique Nash equilibrium and that issues of multiple equilibria do not arise. 

This assumption is once again made to isolate the phenomenon of interest. In the case with multiple Nash equilibria we believe that it is possible to have different equilibrium outcomes exhibiting different scaling behavior---though we leave such analyses for future work. We remark that we make these assumptions for illustrative purposes and that many of our numerical experiments show the same result under milder game structures.

On top of the assumption of strong monotonicity we require several smoothness conditions on the players' objectives as well as an assumption that their interaction is not trivially zero at Nash.

\begin{assumption}
\label{game_assumptions}
Assume the game defined on $f_e$ and $f_l$ is strongly monotone on $\Omega \times \mathcal{E}$. Further assume that 
\begin{enumerate}
    \item $f_l$ and $f_e$ are jointly convex in $\theta$ and $e$.
    \item The gradient mappings, $\nabla f_l \text{ and } \nabla f_e$ exist and are well defined for all $(\theta, e)$. Furthermore, the gradient mappings are $L$-Lipschitz continuous in the joint action space.
    \item The Nash equilibrium $\theta^* \in \Theta$ is on the interior of $\Theta$ with $\nabla_\theta BR_e(\theta^*)\ne0$.
\end{enumerate}
\end{assumption}

To show how the restriction of a model class yields a decrease in loss in a large class of games, we leverage the idea that in many games, a Nash equilibrium is not necessarily a Pareto optimal point \cite{nash_pareto}. 

\begin{definition}
        A point $(\theta,e) \in \Omega \times \mathcal{E}$ is Pareto-optimal, if there \emph{does not} exist  $(\hat{\theta}, \hat{e})$ such that $ f_l(\hat{\theta}, \hat{e}) < f_l(\theta, e)$ and $f_e(\hat{\theta}, \hat{e}) \leq f_e(\theta, e)$ \footnote{Our definition makes the assumption of strict improvement only on the learner's loss as that is what is important for future theorems.}
\end{definition}

Given these assumptions, we prove the following theorem. For ease of exposition, we defer the proof to Appendix~\ref{app:proof1}.

\begin{theorem}
\label{main_theorem_pareto}
    For a two-player monotone game $G$ on $\Theta \times \mathcal{E}$ which satisfies Assumption \ref{game_assumptions}, if the unique Nash equilibrium in $\Theta\times \mathcal{E}$, $(\theta^*, e^*)$, is \emph{not} Pareto optimal then there exists a restriction of the learner's model class (i.e.,  a set $\Theta' \subset \Theta$) such that the restricted game $G'$ on $\Theta' \times \mathcal{E}$ admits a Nash equilibrium $(\theta', e')$ with: $f_l(\theta', e') < f_l(\theta^*, e^*).$
\end{theorem}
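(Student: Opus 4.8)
The plan is to turn the learner into a Stackelberg leader by restricting its action set to a single well-chosen point, and to show that this commitment strictly lowers its loss. Concretely, for any $\theta' \in \Theta$ the restricted game $G'$ on $\{\theta'\} \times \mathcal{E}$ has a unique Nash equilibrium $(\theta', e')$ with $e' = BR_e(\theta') := \argmin_{e \in \mathcal{E}} f_e(\theta', e)$ (the environment's minimizer is unique because strong monotonicity makes $f_e$ strongly convex in $e$), and the learner's equilibrium loss is exactly $g(\theta') := f_l(\theta', BR_e(\theta'))$. Since committing to $\theta' = \theta^*$ reproduces the original equilibrium, $g(\theta^*) = f_l(\theta^*, e^*)$, it suffices to exhibit a point $\theta' \in \Theta$ with $g(\theta') < g(\theta^*)$; the restriction $\Theta' = \{\theta'\}$ then proves the theorem. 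In other words, I reduce the claim to showing that the Nash action $\theta^*$ is \emph{not} an optimal commitment for the commitment cost $g$.

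First I would record the first-order conditions at the equilibrium: because $\theta^*$ lies in the interior of $\Theta$ and minimizes $f_l(\cdot, e^*)$, we have $\nabla_\theta f_l(\theta^*, e^*) = 0$, and because $e^*$ best-responds to $\theta^*$ we have $e^* = BR_e(\theta^*)$. Next I would convert non-Pareto-optimality into an infinitesimal statement. Let $(\hat\theta, \hat e)$ be the improving point, so $f_l(\hat\theta, \hat e) < f_l(\theta^*, e^*)$ and $f_e(\hat\theta, \hat e) \le f_e(\theta^*, e^*)$. Joint convexity of $f_l$ gives
\[
\langle \nabla_\theta f_l(\theta^*, e^*), \hat\theta - \theta^* \rangle + \langle \nabla_e f_l(\theta^*, e^*), \hat e - e^* \rangle \le f_l(\hat\theta, \hat e) - f_l(\theta^*, e^*) < 0,
\]
and since the first inner product vanishes we conclude $\langle \nabla_e f_l(\theta^*, e^*), \hat e - e^* \rangle < 0$; in particular $\nabla_e f_l(\theta^*, e^*) \ne 0$. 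Intuitively, the learner has a strict first-order incentive to move the environment's action, even though it is already optimizing its own.

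I would then differentiate the commitment cost. Using smoothness of $BR_e$ near $\theta^*$ (the implicit function theorem applies since $\nabla^2_{ee} f_e \succeq \mu I$ is invertible) and the chain rule,
\[
\nabla g(\theta^*) = \nabla_\theta f_l(\theta^*, e^*) + \nabla_\theta BR_e(\theta^*)^\top \nabla_e f_l(\theta^*, e^*) = \nabla_\theta BR_e(\theta^*)^\top \nabla_e f_l(\theta^*, e^*),
\]
where $\nabla_\theta BR_e(\theta^*) = -[\nabla^2_{ee} f_e]^{-1} \nabla^2_{e\theta} f_e$. Once $\nabla g(\theta^*) \ne 0$ is established, the conclusion is routine: by the descent lemma (the $L$-Lipschitz gradients make $g$ smooth near $\theta^*$), the point $\theta' = \theta^* - \eta \nabla g(\theta^*)$ satisfies $g(\theta') < g(\theta^*)$ for all sufficiently small $\eta > 0$, and $\theta' \in \Theta$ because $\theta^*$ is interior. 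Setting $\Theta' = \{\theta'\}$ yields $f_l(\theta', e') = g(\theta') < g(\theta^*) = f_l(\theta^*, e^*)$, as required.

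The main obstacle is the non-degeneracy $\nabla g(\theta^*) \ne 0$. The two facts available are exactly $\nabla_e f_l(\theta^*, e^*) \ne 0$ (just derived) and $\nabla_\theta BR_e(\theta^*) \ne 0$ (Assumption~\ref{game_assumptions}(3)), but a product of a nonzero matrix with a nonzero vector can still vanish, so the delicate point is ruling out that $\nabla_e f_l(\theta^*, e^*)$ lies in the kernel of $\nabla_\theta BR_e(\theta^*)^\top$. When the environment's action is scalar this cannot happen and the argument is immediate; in the general case I would either exploit the improving direction $\hat e - e^*$ more carefully---choosing the commitment direction so that the induced best-response change has positive overlap with $-\nabla_e f_l(\theta^*,e^*)$---or strengthen Assumption~\ref{game_assumptions}(3) to guarantee the relevant alignment. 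Handling this degeneracy cleanly, rather than the subsequent descent step, is where the real work lies.
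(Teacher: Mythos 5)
Your proposal is, in all essentials, the paper's own proof. The paper also works with the commitment value $\bar f_l(\theta) := f_l(\theta, BR_e(\theta))$ (your $g$), uses interiority of $\theta^*$ and the Pareto-improving point to conclude $\nabla_\theta f_l(\theta^*,e^*) = 0$ and $\nabla_e f_l(\theta^*,e^*) \neq 0$, picks a direction $v$ with $\langle v, \nabla \bar f_l(\theta^*)\rangle > 0$, sets $\theta' = \theta^* - \delta v$ and $e' = BR_e(\theta')$, and finishes with a second-order Taylor bound that is exactly your descent-lemma step. The only structural difference is the restricted class itself: you take the singleton $\Theta' = \{\theta'\}$, whereas the paper keeps a set with nonempty interior by intersecting $\Theta$ with the half-space $\{\theta : \langle \theta - \theta', v\rangle \le 0\}$ (which excludes $\theta^*$) and with $\{\theta : \langle \nabla_\theta f_l(\theta',e'), \theta'-\theta\rangle \le 0\}$ (which makes $\theta'$ a constrained best response to $e'$). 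For the theorem as literally stated your singleton works and is simpler, since $(\theta', BR_e(\theta'))$ is trivially a Nash equilibrium of the game on $\{\theta'\}\times\mathcal{E}$; the paper's construction only buys a restricted class that is not degenerate as a model class.

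The substantive issue is the one you flag at the end. You are right that everything hinges on $\nabla \bar f_l(\theta^*) = \nabla_\theta BR_e(\theta^*)^\top \nabla_e f_l(\theta^*,e^*) \neq 0$, and right that this does not follow from $\nabla_\theta BR_e(\theta^*) \neq 0$ together with $\nabla_e f_l(\theta^*,e^*) \neq 0$. What you could not have known is that the paper closes this gap with precisely the inference you reject: it asserts the product cannot vanish ``since if it were, it would mean that either $\nabla_\theta BR_e(\theta^*)$ or $\nabla_e f_l(\theta^*, BR_e(\theta^*))$ were equal to zero.'' That inference is invalid once $\dim \mathcal{E} \ge 2$: the product vanishes exactly when $\nabla_e f_l(\theta^*,e^*)$ is orthogonal to the range of $\nabla_\theta BR_e(\theta^*)$, and nothing in Assumption~\ref{game_assumptions} excludes this --- strong monotonicity constrains $\nabla_\theta f_l$ and $\nabla_e f_e$, not the cross-gradient $\nabla_e f_l$, and the improving direction $\hat e - e^*$ supplied by non-Pareto-optimality need not lie in the range of the best-response Jacobian, so your first suggested fix (exploiting that direction) does not go through either. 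Hence your proposal and the paper's proof share the same hole; yours is simply explicit about it. The clean repair is your second suggestion: strengthen Assumption~\ref{game_assumptions}(3) to require that $\nabla_\theta BR_e(\theta^*)$ have full row rank (automatic when $\dim\mathcal{E}=1$, as you note), or assume outright that $\nabla_\theta BR_e(\theta^*)^\top \nabla_e f_l(\theta^*,e^*) \neq 0$.
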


This theorem highlights the fact that the non-monotonicity of scaling laws is, in fact, something we should expect in large classes of games. Indeed, even under mild conditions one can show that there always exists a unilateral restriction that improves payoffs. 

While the theorem guarantees the existence of a unilateral restriction, which improves equilibrium performance, we remark that it does not say anything about the ease with which one can find this restricted space. While the proof is constructive, it makes use of information of the environment's loss to construct the set---information that may not always be available to the learner. Furthermore, as we show in the following examples, the non-monotonicity can play out in complex ways.

\paragraph{Example 1: Multi-Agent Reinforcement Learning}

\begin{figure*}[t!]  
  \centering
  \includegraphics[width=0.9\textwidth]{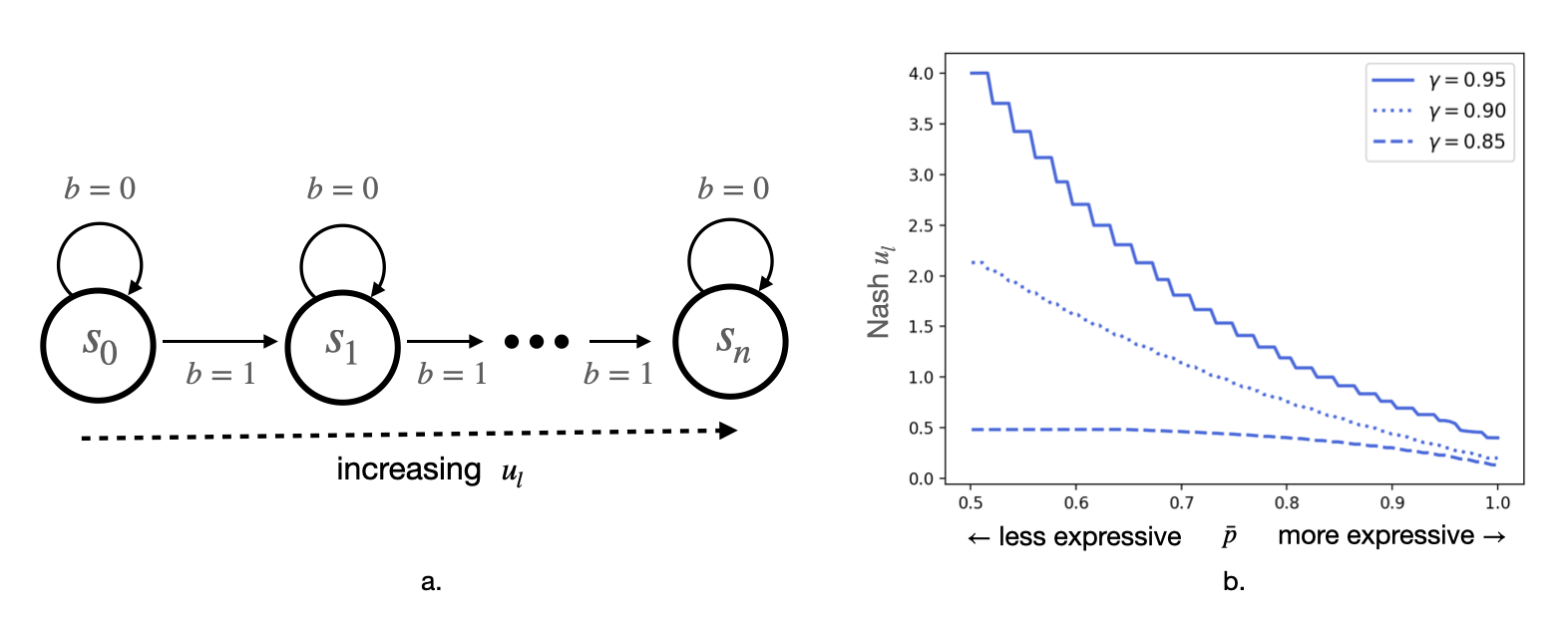}  
  \vspace{-3ex}\caption{ (a.) A visual description of a 2-player Markov game in which the learner can unilaterally increase their payoff by restricting the expressivity of their policy class. (b.) the payoff of the learner at Nash in a 50-state version of this Markov game as their policy class is restricted to take the form $\pi_l(s)=[p,1-p]$ in all states $s$ for $p\in [1-\bar p, \bar p]$ for different discount factors (assumed to be the same for both players). In all cases, we see the learners' payoff broadly \emph{increase} at Nash as they optimize over smaller policy classes.}
  \label{fig:combined_figure}
\end{figure*}
We first demonstrate an extreme form of the reverse scaling predicted by Theorem~\ref{main_theorem_pareto} in the context of multi-agent reinforcement learning. To do so, we construct a Markov game in which the more the learner restricts their policy class, the more their expected payoff increases. Note that in keeping with the language of MARL, we consider the case when both players would like to maximize their long-run discounted rewards.

The Markov game in question is a two-player game with $n$ states. In each state $s_i$, with $i \in \{1, 2, \ldots, n\}$, both players have two actions available to them $\{0, 1\}$ with 0 corresponding to the top row or left column. In each state, the environment is allowed to choose a policy that is unrestricted, meaning that $\pi_e(s)=[p,1-p]$ for any $p \in [0,1]$. The learner can choose from policies such that: $\pi_l(s)=[p,1-p]$ for all $p \in [1-\bar p, \bar p]$ for some $\bar p \in [0.5,1]$. Varying $\bar p$ generates model classes of varying expressivity. For example, when $\bar p=1$, then they are allowed to choose any policy, and for $\bar p=0.5$, they are constrained to only playing uniform policies. 

Given actions $a, b$ from the learner and environment, respectively, we define the transition probabilities as:
\[p(s_{i + 1}| s_i, a = 0, b = 1) = p(s_{i + 1}| s_i, a = 1, b = 1) = 1\]
\[p(s_{i}| s_i, a = 0, b = 0) = p(s_{i}| s_i, a = 1, b = 0) = 1\]
Thus, the transitions are deterministic, given the action of the environment. This results in the following utilities for the two players, which are simply their sum of discounted rewards\footnote{We switch from losses to utilities to emphasize the fact that players would like to maximize their reward.}:
\[ u_{i}(\pi_l,\pi_e)=\mathbb{E}_{\pi_l,\pi_e} \left[ \sum_{t=0}^\infty \gamma_i^t R_i(s_t,a_t,b_t)\right]\]
where $i \in\{e,l\}$ and $\gamma_e,\gamma_l$ are the player's discount factors.
We construct the payoffs for the learner such that they have a dominant strategy of $\pi_l(s)=[\bar p,1-\bar p]$ in all states, and their expected cumulative payoff increases as the players end up further along the chain of states (as seen in Figure~\ref{fig:combined_figure}). 

 We construct the payoffs of the environment player such that they trigger a switch to the next state if and only if the probability that the learner puts on action $0$ is below some threshold $p^*(s)$. We do so for a sequence of thresholds $p^*_i$ for $i=1,...,n$ such that $p^*_i>p^*_{i+1}>0.5$. With this construction, for $p^*_{i+1}<\bar p<p^*_i$ the game will result in the players staying in state $s_{i}$ for all time. More details on the construction of the payoff matrices are left to Appendix~\ref{app:marl} and a plot of the equilibrium rewards for the learner as a function of $\bar p$ is shown in Figure~\ref{fig:combined_figure} for different discount factors for games having $n=50$ states. 
 
 We empirically observe that as $\bar p$ decreases (i.e., the policy class of the learner is restricted), the performance of the learner behaves in non-monotonic ways and can, in fact, be made to increase as the policy class gets closer to the uniform policy. The highly non-convex nature of the case where $\gamma=0.95$ also highlights the difficulty in choosing a model class in general since it can be posed as a non-convex optimization problem.

 A key takeaway of this example is that in general-sum MARL, restrictive policy parametrizations like e.g., softmax policies or function approximation may not lead to worse performance at equilibrium like in competitive and single-agent RL~\cite{func_error}. Indeed our example suggests that the payoff in quantal response equilibria~\cite{qre} of Markov games (i.e., equilibria in which agents constrain their strategies to a class of quantal responses--see e.g., \cite{qre}) can sometimes have a higher payoff than the unrestricted Nash equilibrium.

\paragraph{Example 2: Participation dynamics}
Our second example is similar to problems considered in the literature on performative prediction~\cite{perdomo} though the setup we consider also fits the literature on understanding participation dynamics~\cite{jagadeesan_improved_2023,dean2023emergent} and algorithmic collective action~\cite{collective_action}.

In this model, there is a base distribution $\mathcal{P}_0$ over the input-output space $\mathcal{X} \times \mathcal{Y}$ where $\mathcal{X}$ is feature space and $\mathcal{Y}$ is the output space. The learner is trying to perform supervised learning to learn a mapping $g:\mathcal{X} \mapsto \mathcal{Y}$. The environment, on the other hand, takes the form of a population of agents that selects a distribution on $\mathcal{P}$ on the input-output space (i.e., $\mathcal{P}\in \Delta(\mathcal{X}, \mathcal{Y})$) to maximize their own utility which depends on the choice of the learner.

The least restrictive class of models the learner has access to $\Omega$ is the set of all functions $g:\mathcal{X}\rightarrow \mathcal{Y}$. We also consider a restricted function class $\Theta$ which is the class of all functions $g_r:\mathcal{X}'\rightarrow \mathcal{Y}$ where $\mathcal{X}'\subset \mathcal{X}$ is the result of some feature mapping $\phi:\mathcal{X}\rightarrow \mathcal{X}'$. Thus, $\Theta$ is the space of all functions of the form $g_r(\phi(x))$. Clearly $\Theta \subset \Omega$. 

We assume that the strategic manipulations of the environment take the form of manipulations to the data distribution which take place by mixing the base distribution $\mathcal{P}_0$ with a manipulated data distribution $\mathcal{P}_e$ such that the distribution seen by the learner is given by $\mathcal{P} = \alpha \mathcal{P}_e + (1 - \alpha) \mathcal{P}_0$ for some $\alpha \in [0, 1]$. The parameter $\alpha$ relates to the strength of the response distribution within the mixture that the learner observes. It might represent the fraction of the population that engages in strategic manipulations of their data.

Finally, we assume that the learner is optimizing the zero-one loss, such that for any distribution $\mathcal{P}$, the best response $g$ or $g_r$ is the Bayes-Optimal classifier on $\mathcal{X}$ and $\mathcal{X'}$ respectively are:\vspace{-1ex}\begin{align*}
    g^*(x)=\arg\max_{y} \mathcal{P}(y|x) \ \text{\& } g^*_r(x)=\arg\max_{y} \mathcal{P}(y|\phi(x)).
\end{align*} 
Thus, the learner's loss at equilibrium is given by:
\begin{align*}
    f_l(g^*,\mathcal{P})&=Pr(g^*(x)\ne y)\\ f_l(g_r^*,\mathcal{P})&=Pr(g_r^*(x)\ne y),
\end{align*}
in each case, the probability is taken with respect to $\mathcal{P}$. 

For the population of strategic agents, we assume that they would like the learner to avoid making use of certain `protected' features and focus on a set of restricted features $\phi^*$. To do so, the strategic agents' response to the learner's model depends on the set of features it makes use of. Concretely, if the learner makes use of a set of features $\phi':\mathcal{X}\rightarrow \mathcal{X}'$ that are more informative than some $\phi^*:\mathcal{X}\rightarrow \mathcal{X}^*$---i.e., $\mathcal{X}^*\subset \mathcal{X}'$, then the strategic agents add uniform noise to the base distribution, and if not they report their true data. This can be represented by the following utility function:
\[ f_e(g,\mathcal{P})=\begin{cases} TV(\mathcal{P}_e,U) \ \text{: } Pr(g(x)\ne g(\phi^*(x)))>0\\ TV(\mathcal{P}_e,\mathcal{P}_0) \ \text{otherwise,}  \end{cases}\]
where $U$ is the uniform distribution on $\mathcal{X}\times \mathcal{Y}$ and $TV$ represents the $TV$ distance between distributions. 
 
 As we will show, for sufficiently large $\alpha$, the learner is always better off optimizing over the less expressive model class at equilibrium. To do so, we assume that $\phi^*$ preserves enough information for the Bayes optimal classifier on the space $\mathcal{X}^*$ to be strictly better than random choice.
 \begin{assumption}\label{ass:stratclass}
     Let $|\mathcal{Y}|=n$. Assume that the Bayes optimal classifier on $\mathcal{X}^*$ for $\mathcal{P}_0$ denoted $g^*_r(x)=\arg\max_{y \in \mathcal{Y}} \mathcal{P}(y|\phi^*(x))$ satisfies: \[f_l(g_r^*,\mathcal{P}_0)=Pr(g_r^*(x)=y)<\frac{1}{n}\]
 \end{assumption}

This leads to the following result for this game.

\begin{proposition}
\label{prop: strat response}
Under Assumption~\ref{ass:stratclass}, consider two functions classes over which the learner can optimize, $\Omega$, and $\Theta$ which is the set of all functions from $g_r:\mathcal{X}^*\rightarrow \mathcal{Y}$, where $\mathcal{X}^*=\phi^*(\mathcal{X})$ such that $\phi^*(x)=x$ for $x \in \mathcal{X}^*$. Consider the corresponding games denoted $G$ and $G^*$, respectively. Then the Nash equilibrium in $G$ is $(g^*,\mathcal{P}^*)$ where $\mathcal{P}^*=(1-\alpha)\mathcal{P}_0+\alpha U$ and the Nash equilibrium in $G^*$ is given by $(g_r^*,\mathcal{P}_0)$. Furthermore, there exists a range of $\alpha \in (0,1)$ such that:
\[ f_l(g^*,\mathcal{P}^*)>f_l(g_r^*,\mathcal{P}_0)\]
\end{proposition}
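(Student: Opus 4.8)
The plan is to verify the two claimed Nash equilibria by checking mutual best responses, and then reduce the payoff comparison to a one-variable linear inequality in $\alpha$. Throughout I read Assumption~\ref{ass:stratclass} as asserting that the restricted clean Bayes classifier beats random guessing, i.e. writing $e_r := f_l(g_r^*,\mathcal{P}_0) = \Pr_{\mathcal{P}_0}(g_r^*(x)\neq y)$, one has $e_r < 1 - 1/n$.

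First I would pin down the environment's best response. Since $TV(\cdot,\mathcal{Q})$ is uniquely minimized over $\Delta(\mathcal{X}\times\mathcal{Y})$ at $\mathcal{P}_e=\mathcal{Q}$, the environment's best response to a classifier $g$ is $\mathcal{P}_e=U$ whenever $\Pr(g(x)\neq g(\phi^*(x)))>0$ (the informative branch) and $\mathcal{P}_e=\mathcal{P}_0$ otherwise. In the restricted game $G^*$ every admissible classifier $g_r$ factors through $\phi^*$, so $g_r(x)=g_r(\phi^*(x))$ identically and the environment is always in the second branch; its best response is $\mathcal{P}_0$, to which the learner best-responds with $g_r^*$. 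This certifies $(g_r^*,\mathcal{P}_0)$ as the Nash equilibrium of $G^*$.

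For the full game $G$, the key observation is that the $\arg\max$ Bayes classifier is invariant to mixing in uniform noise: because $U(y\mid x)=1/n$ is constant in $y$, one has $\arg\max_y \mathcal{P}^*(y\mid x) = \arg\max_y \mathcal{P}_0(y\mid x)$ for every $\alpha<1$, so the learner's best response to $\mathcal{P}^*=(1-\alpha)\mathcal{P}_0+\alpha U$ is exactly the clean full-feature Bayes classifier $g^*$. Granting the structural assumption that the full features carry label information beyond $\phi^*$ (so that $g^*$ genuinely separates some $x$ from $\phi^*(x)$, placing the environment in the informative branch), the environment best-responds with $U$, closing the loop and certifying $(g^*,\mathcal{P}^*)$ as the Nash equilibrium of $G$. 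I expect this to be the main obstacle, since it is where one must rule out the degenerate case in which $g^*$ collapses onto the $\phi^*$-measurable classifier; the invariance computation is what makes this step clean and uniform in $\alpha$.

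Finally I would exploit linearity of the $0/1$ loss in the distributional mixture. Since $\mathcal{P}^*$ is a convex combination and $g^*$ is a fixed classifier, $f_l(g^*,\mathcal{P}^*) = (1-\alpha)\,e_0 + \alpha\,\Pr_U(g^*(x)\neq y)$ where $e_0 := \Pr_{\mathcal{P}_0}(g^*(x)\neq y)$, and under the uniform distribution any fixed classifier is correct with probability $1/n$, so $\Pr_U(g^*(x)\neq y)=1-1/n$. Using $e_0 \le e_r$ (more features cannot raise the clean Bayes error) together with $e_r<1-1/n$, the target inequality $f_l(g^*,\mathcal{P}^*)>f_l(g_r^*,\mathcal{P}_0)=e_r$ becomes the linear condition $\alpha\big[(1-1/n)-e_0\big] > e_r-e_0$, i.e. $\alpha > (e_r-e_0)/\big((1-1/n)-e_0\big)$. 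The right-hand side lies in $[0,1)$ precisely because $e_0\le e_r<1-1/n$, so every $\alpha$ in this nonempty subinterval of $(0,1)$ yields the claim.
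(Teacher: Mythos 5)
Your proof is correct and takes essentially the same route as the paper's: certify both equilibria by mutual best responses (using that $TV(\cdot,\mathcal{Q})$ is minimized at $\mathcal{Q}$, and that every restricted classifier factors through $\phi^*$), then use linearity of the zero--one loss in the mixture $\mathcal{P}^*=(1-\alpha)\mathcal{P}_0+\alpha U$ to exhibit a threshold on $\alpha$ above which the unrestricted equilibrium is worse. If anything, your write-up is tighter: you keep the exact decomposition $(1-\alpha)e_0+\alpha(1-1/n)$ where the paper discards the clean-loss term and works with a cruder bound (and has typos conflating accuracy $1/n$ with error $1-1/n$ in both the assumption and the proof), you add the clean invariance observation $\arg\max_y \mathcal{P}^*(y\mid x)=\arg\max_y \mathcal{P}_0(y\mid x)$, and you state explicitly the non-degeneracy condition---that $g^*$ must genuinely depend on features beyond $\phi^*$---which the paper also needs but merely asserts when it claims $\Pr(g^*(x)\ne g^*(\phi^*(x)))>0$.
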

The proof of this proposition can be found in Appendix \ref{App: theory proofs}. This proposition highlights the fact that interactions with strategic agents can make less expressive function classes yield better performance in strategic settings.

\section{Online Learning for Model Selection in Games}
\label{sec:model_selection}

The previous results emphasize the importance of careful model selection in strategic environments. In this section, we consider the problem of learning the best model class to optimize in strategic environments. 

Due to the unknown and non-stationary nature of the environment, in game theoretic settings, the learner will have to interact repeatedly with the environment to learn which model class and, consequently, which strategy to play. Thus, we formulate a problem of learning in games in which the learner seeks to find the best model class across a set of candidate model classes as well as the best strategy. We frame this as a problem of \emph{model selection} for games. 

We remark that model-selection is an area of recent interest in online learning~\cite{modelselection1,modelselection2}, though---to the best of our knowledge---the paradigm has not been applied to games as yet. Most similar to this problem is a line of work on meta-learning in games, which seeks to find good strategies that generalize across environments~\cite{harris2023metalearning}.

We describe an algorithm for how the learner can select model classes to identify which model class to use. As a proof-of-concept, we assume that all players use stochastic gradient descent and adopt the structure of a problem we analyzed in the Nash environment regime. In particular, we assume the learner has access to sets of subsets of $\Omega=\mathbb{R}^d$ and that their loss and the environment's loss satisfy the following generalization of Assumption~\ref{game_assumptions}. For simplicity, we let the tuple of a particular model and the environment action be denoted by $x$ (i.e., $(\theta, e) = x$). We note here that $F$ is the generalized gradient mapping as described in Definition \ref{def: monotone}.

\begin{assumption}
\label{ass:online}
Assume the game defined on $f_e$ and $f_l$ is strongly monotone and that they are $L$-Lipschitz continuous on $\Omega \times \mathcal{E}$. Further, assume that the players have access to stochastic gradient estimators such that the estimated monotone mapping $\hat F$ satisfies, $\forall x \in \Omega \times \mathcal{E}$:\[ \mathbb{E}[\hat F(x)]=F(x)  \ \ \text{and } \ \ \mathbb{E}[\|\hat F(x)-F(x)\|^2]\le \sigma^2.\]
\end{assumption}

Given this assumption and under the simplifying assumption that all players use decreasing stepsizes, we assume that for a given model class $\Theta_i$ the players engage in projected stochastic gradient descent of the form: \[x_{t+1}=\Pi_{\Theta_i\times \mathcal{E}}\left(x_t-\eta_t \hat F(x_t)\right),\]
where $\Pi$ denotes the Euclidean projection onto $\Theta_i\times \mathcal{E}$. The pseudocode for this 
is described in Algorithm~\ref{Grad_descent}. We show that the running average of the iterates resulting from running this algorithm in an environment satisfying Assumption~\ref{ass:online} concentrates quickly around the payoff at the Nash equilibrium. The proof of this proposition can be found in Appendix~\ref{app:proof2}.
\begin{algorithm}[t!]
    \caption{Stochastic gradient descent to find Nash equilibrium in a strongly monotone game}
    \label{Grad_descent}
    \begin{algorithmic}[1]
        \Procedure{PSGD($\Theta, x_0, T$)}{}
        \For{$t \gets 1$ \textbf{to} $T$}  
            \State $\eta_t = \frac{2}{\mu(t + 1)}$  
            \State $x_{t + 1} \leftarrow \Pi_{\Theta \times \mathcal{E}}(x_t - \eta_t\hat{F}(x_t))$
        \EndFor
        \State $\hat x_T=\textbf{return} \sum_{t = 1}^T \frac{t}{T(T+ 1)/2} x_t$
        \EndProcedure
        \end{algorithmic}
\end{algorithm}

\begin{proposition}\label{prop:grad_descent}
    Let $\Theta$ correspond to a particular model class which results in an instance of continuous action $\mu-$strongly monotone game with a unique Nash Equilibrium $x^*$. Under Assumption~\ref{ass:online} and the assumption that all players use stepsize schedule $\eta_t = \frac{2}{\mu(t+ 1)}$, for any $\delta \in (0,1)$ Algorithm \ref{Grad_descent} yields an estimate $\hat x_T$ such that:
    \[|f_l(\hat x_T) - f_l(x^*)| \leq \mathcal{O}\left(\frac{L^2\log(\frac{1}{\delta}) + L^3}{\mu^2T}\right),\]   
    with probability at least $1-\delta$.
\end{proposition}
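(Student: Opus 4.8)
The plan is to reduce the claim to a high-probability $\mathcal{O}(1/T)$ bound on the squared distance $\|\hat x_T - x^*\|^2$ between the weighted average and the unique Nash equilibrium, and then to convert this distance bound into the stated loss gap using $L$-smoothness of $f_l$ together with the equilibrium stationarity conditions.

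First I would derive a one-step inequality for $\|x_{t+1}-x^*\|^2$. Because $x^*$ is the unique Nash equilibrium of the strongly monotone game, it solves the variational inequality $\langle F(x^*), x - x^*\rangle \ge 0$ for all $x\in\Theta\times\mathcal{E}$, and hence is a fixed point of the projected step, $x^* = \Pi_{\Theta\times\mathcal{E}}(x^* - \eta_t F(x^*))$. Using nonexpansiveness of the Euclidean projection, expanding the square, and writing the stochastic field as $\hat F(x_t)=F(x_t)+\xi_t$ with $\mathbb{E}_t[\xi_t]=0$ and $\mathbb{E}_t\|\xi_t\|^2\le\sigma^2$ (Assumption~\ref{ass:online}), strong monotonicity supplies a contraction factor while the variational inequality cancels the $\langle F(x^*),x_t-x^*\rangle$ term, yielding
\[\|x_{t+1}-x^*\|^2 \le (1-2\mu\eta_t)\|x_t-x^*\|^2 - 2\eta_t\langle \xi_t,\, x_t-x^*\rangle + \eta_t^2\|\hat F(x_t)\|^2 ,\]
and I would bound the last term by $\mathbb{E}_t\|\hat F(x_t)\|^2\le 2\sigma^2+2L^2\|x_t-x^*\|^2$ using $L$-Lipschitzness of $F$. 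Multiplying by the weight $t$ and telescoping against the schedule $\eta_t=2/(\mu(t+1))$ collapses the contraction factors cleanly, so that after dividing by $T(T+1)/2$ and applying Jensen's inequality to the convex map $x\mapsto\|x-x^*\|^2$ I obtain a bound on $\|\hat x_T-x^*\|^2$ in terms of $1/(\mu^2 T)$ times an accumulated noise term.

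The hard part will be upgrading the resulting in-expectation estimate to a bound that holds with probability at least $1-\delta$. The martingale increments $-2\eta_t\langle\xi_t, x_t-x^*\rangle$ are not uniformly bounded: their conditional variance is proportional to $\|x_t-x^*\|^2$, precisely the quantity being controlled, so a naive Azuma bound does not close. I would instead apply a Freedman/Bernstein-type concentration inequality for martingales, in which the predictable quadratic variation is self-bounding, being dominated by the weighted sum of the $\|x_t-x^*\|^2$ terms already present on the right-hand side. Resolving this coupling---either by a stopping-time induction or by absorbing a constant fraction of the quadratic variation into the contraction factor---is what introduces the additive $\log(1/\delta)$ factor; substituting the second-moment bound $2\sigma^2+2L^2\|x_t-x^*\|^2$ and the smoothness constant then produces the separate $L^2$ and $L^3$ scalings in the numerator.

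Finally, I would pass from distance to loss. Since $\nabla f_l$ is $L$-Lipschitz, $f_l$ is $L$-smooth, giving
\[|f_l(\hat x_T)-f_l(x^*)| \le |\langle \nabla f_l(x^*),\, \hat x_T-x^*\rangle| + \tfrac{L}{2}\|\hat x_T-x^*\|^2 .\]
The interior-Nash condition of Assumption~\ref{game_assumptions} forces the learner's own-coordinate gradient $\nabla_\theta f_l(x^*)$ to vanish, so the first-order term collapses to the environment-coordinate residual $\langle \nabla_e f_l(x^*),\, \hat e_T-e^*\rangle$; controlling this residual is a second point requiring care, since when $\nabla_e f_l(x^*)=0$ (for instance in the zero-sum case) it disappears and the $\tfrac{L}{2}\|\hat x_T-x^*\|^2$ term dominates and inherits the $\mathcal{O}(1/T)$ rate, whereas a crude bound otherwise only yields $\mathcal{O}(1/\sqrt T)$, so the equilibrium structure of the game must be invoked here. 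Granting this cancellation, the distance bound from the first three steps transfers directly to the loss gap, establishing the claim with probability at least $1-\delta$.
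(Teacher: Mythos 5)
Your proposal follows essentially the same route as the paper's proof: a one-step projected-SGD inequality using nonexpansiveness of the projection and the variational characterization of $x^*$, strong monotonicity to produce the contraction/negative quadratic term, multiplication by the weight $t$ so that the schedule $\eta_t = \frac{2}{\mu(t+1)}$ telescopes, Jensen's inequality applied to $x \mapsto \|x - x^*\|^2$ to pass to the weighted average $\hat x_T$, and a generalized Freedman inequality for the martingale term whose predictable variation is proportional to the very distances being controlled---the paper invokes Lemma 4.1 of \cite{harvey_simple_2019} for exactly the self-bounding issue you describe. One discrepancy in the details: the paper's proof quietly assumes $\|\hat E(x_t)\| \le 1$ almost surely (strictly stronger than the variance bound in Assumption~\ref{ass:online}) and bounds $\|\hat F(x_t)\| \le L+1$ via Lipschitzness of the losses; your version keeps only the variance bound $\sigma^2$, but then the Freedman step does not go through verbatim---you would need a.s. boundedness, sub-Gaussianity, or a truncation argument, so you are implicitly making the same strengthening the paper does.

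The step you flag as the hard part at the end---converting $\|\hat x_T - x^*\|^2 = \mathcal{O}(1/T)$ into $|f_l(\hat x_T) - f_l(x^*)| = \mathcal{O}(1/T)$---is a genuine gap, and you should know that the paper does not resolve it either: its final line passes from $\|\hat x_T - x^*\|^2 \le B$ to $|f_l(\hat x_T) - f_l(x^*)| \le LB$, which $L$-Lipschitz continuity of $f_l$ does not license (it gives $L\sqrt{B}$, i.e., an $\mathcal{O}(\sqrt{\log(1/\delta)/T})$ rate). Your smoothness decomposition isolates precisely what would be needed: the first-order term $\langle \nabla f_l(x^*), \hat x_T - x^*\rangle$ must be negligible relative to $\|\hat x_T - x^*\|^2$. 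But at a Nash equilibrium of a general-sum game $\nabla_e f_l(x^*) \ne 0$ in general---indeed the proof of Theorem~\ref{main_theorem_pareto} explicitly relies on this nonvanishing in the non-Pareto-optimal regime, which is exactly the regime the paper cares about---so "granting this cancellation" concedes the case of interest. In short, your argument and the paper's coincide on the distance bound, which is correct; neither establishes the stated $\mathcal{O}(1/T)$ rate for the loss gap without additional structure, and your proposal has the virtue of making that failure point explicit rather than hiding it in a one-line conversion.
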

To derive this bound, we generalize an existing result from convex optimization~\cite{harvey_simple_2019}. Given these confidence bounds, we now propose a successive elimination algorithm for identifying the best model class in a game. The underlying assumption remains that the environment player is simply doing stochastic gradient descent. This should also extend to the case when the environment performs stochastic mirror descent~\cite{banditalgs}. The specific form of successive elimination is described in Algorithm~\ref{SuccessiveElimination}.
\begin{algorithm}[t!]
    \caption{Successive Elimination for Best Design action Identification}
    \label{SuccessiveElimination}
    \begin{algorithmic}[1]
     \Procedure{SuccessiveElimination}{$\{\Theta_i\}_{i =1}^n$, $\delta$}
        \State  $S \leftarrow \{\Theta_i\}_{i =1}^n $
        \State $\tau=1$
        \While{$|S| > 1$} 
            \State $T = \alpha 2^{\tau}$
            \State  $\delta' \leftarrow \frac{\delta}{2nT^2}$
            \ForAll{$\Theta_i \in S$} 
                \State $x^i_{T} \gets$  Algorithm \ref{Grad_descent} with ($\Theta_i, x_0, T$) 
                \EndFor
            \State $S \gets S \setminus \{ \Theta_i \in S : \exists \Theta_j \;$ such that:       \Statex$ f(x^i_T) + \frac{L^2\log(\frac{1}{\delta'}) + L^3}{\mu^2T} < f(x^j_T) -  \frac{L^2\log(\frac{1}{\delta'}) + L^3}{\mu^2T}\}$
        \State $\tau = \tau + 1$
        \EndWhile
        \State \textbf{return} $S$
        \EndProcedure
        \end{algorithmic}
\end{algorithm}

As we show, this algorithm has strong properties in terms of identification of the best model class due to the fast concentration of our estimator from Proposition~\ref{prop:grad_descent}.
We show the results with respect to identification and defer the proof to Appendix~\ref{app:proof2}.
\begin{proposition} \label{prop:successiveelimination}
    Under the assumptions of Proposition~\ref{prop:grad_descent}, let $\mathcal{A} = \{ \Theta_i\}_{i = 1}^{n}$. With probability at least $1 - \delta$, Algorithm \ref{SuccessiveElimination} identifies the model class whose Nash equilibrium yields the highest payoff after: \[\mathcal{O}\left(\frac{n(L^2\log(\frac{n}{\delta}) + L^3)}{\mu^2\Delta^*}\right)\] interactions with the environment, where $\Delta^*$ is the minimum suboptimality gap of the Nash equilibrium of a function class compared to that of the best function class.
\end{proposition}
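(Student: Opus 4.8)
\textbf{Proof proposal for Proposition~\ref{prop:successiveelimination}.}

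The plan is to treat this as a standard best-arm-identification argument, but with the ``arms'' being the model classes $\Theta_i$ and the ``reward'' of arm $i$ being $f_l(x^*_i)$, the learner's loss at the unique Nash equilibrium of the game restricted to $\Theta_i \times \mathcal{E}$. The key input is Proposition~\ref{prop:grad_descent}, which tells us that after running Algorithm~\ref{Grad_descent} for $T$ steps on class $\Theta_i$ we obtain an estimate $\hat x^i_T$ whose payoff satisfies $|f_l(\hat x^i_T) - f_l(x^*_i)| \le C_{\delta'}(T)$ with probability at least $1-\delta'$, where I write $C_{\delta'}(T) = \mathcal{O}\!\left(\frac{L^2\log(1/\delta') + L^3}{\mu^2 T}\right)$ for the confidence width used in the elimination test. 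So each ``pull'' of an arm is a call to PSGD, and $C_{\delta'}(T)$ plays the role of a confidence radius that shrinks like $1/T$ rather than the usual $1/\sqrt{T}$ from Hoeffding-style bounds, which is what yields the improved sample complexity.

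First I would set up the good event. At epoch $\tau$ we run each surviving class for $T = \alpha 2^\tau$ steps and take a union bound over all $n$ classes and all epochs using $\delta' = \delta/(2nT^2)$; since $\sum_{\tau} \sum_{i} \delta'$ telescopes (the $1/T^2 = 1/(\alpha 2^\tau)^2$ factor makes the geometric sum over $\tau$ converge), the total failure probability is at most $\delta$. On the complementary good event, every estimate $f_l(\hat x^i_T)$ produced throughout the run lies within $C_{\delta'}(T)$ of its true Nash payoff $f_l(x^*_i)$ simultaneously. Second, I would verify \emph{correctness}: the elimination test removes $\Theta_i$ only when some $\Theta_j$ has $f(\hat x^i_T) + C_{\delta'}(T) < f(\hat x^j_T) - C_{\delta'}(T)$, and on the good event this inequality forces $f_l(x^*_i) < f_l(x^*_j)$ strictly, so the best class is never eliminated, and the two-sided confidence intervals guarantee the surviving singleton is indeed the optimum.

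Third, I would bound the \emph{sample complexity}. A suboptimal class $\Theta_i$ with gap $\Delta_i = f_l(x^*_{\mathrm{best}}) - f_l(x^*_i)$ (appropriately signed so larger payoff is better) cannot survive once the confidence width has shrunk below $\Delta_i/4$, i.e. once $C_{\delta'}(T) \lesssim \Delta_i/4$. Solving $\frac{L^2\log(1/\delta') + L^3}{\mu^2 T} \asymp \Delta_i$ for the stopping epoch and substituting $\delta' = \delta/(2nT^2)$ (which contributes only a $\log$ factor, giving the $\log(n/\delta)$ in the statement after absorbing the $\log T$ terms), the doubling schedule means class $i$ is pulled for at most $\mathcal{O}\!\left(\frac{L^2\log(n/\delta)+L^3}{\mu^2 \Delta_i}\right)$ total interactions before elimination; summing over all $n$ classes and replacing each $\Delta_i$ by the worst-case gap $\Delta^*$ gives the claimed $\mathcal{O}\!\left(\frac{n(L^2\log(n/\delta)+L^3)}{\mu^2 \Delta^*}\right)$ bound.

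The main obstacle I anticipate is the careful handling of the logarithmic and doubling-schedule bookkeeping so that the concentration radius from Proposition~\ref{prop:grad_descent} composes cleanly across epochs: one must check that substituting the epoch-dependent $\delta' = \delta/(2nT^2)$ into $C_{\delta'}(T)$ keeps the resulting $\log(1/\delta') = \log(2nT^2/\delta)$ term from inflating the final gap-dependence, and that the geometric growth of $T$ rather than a fixed horizon is what converts the per-epoch $1/T$ width into the additive $1/\Delta^*$ complexity. The union bound's convergence hinges on the $T^{-2}$ factor, so I would double-check that choice is what makes $\sum_\tau \delta'$ summable; the rest is a routine adaptation of the successive-elimination template to confidence widths that decay linearly in the number of interactions.
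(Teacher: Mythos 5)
Your proposal is correct and follows essentially the same route as the paper's proof: an anytime union-bound over classes and epochs using $\delta' = \delta/(2nT^2)$ to define the good event, a correctness argument showing the best class is never eliminated since separation of confidence intervals on the good event would contradict optimality, and a gap-based elimination time of order $\frac{L^2\log(n/\delta)+L^3}{\mu^2\Delta_i}$ combined with the doubling schedule and a sum over the $n$ classes (with $\Delta_i$ coarsened to $\Delta^*$). The bookkeeping subtleties you flag --- absorbing the $\log T$ from $\log(1/\delta')$ and the geometric sum over epochs --- are exactly the steps the paper handles, so no gap remains.
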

This result indicates that finding the best model class out of a set of candidate model classes may be computationally tractable in certain regimes. An interesting question that we leave for future work is whether it is possible to be no regret, not just within a model class, but across a set of model classes as well. 

\section{Conclusion}
 This work seeks to provide a framework for understanding the complexities that arise when models are released into strategic environments. We show that the prevailing understanding of scaling laws in machine learning fails to hold in large classes of strategic environments and show its implications for MARL and strategic classification, among other areas. Lastly we highlight a possible algorithmic solution to overcoming the problem of model selection in games in which we were able to design an algorithm to efficiently learn the best model class to optimize over without sacrificing performance in terms of regret. 

Altogether, our results are a first step towards understanding scaling laws and hence, model selection in strategic environments. Our results suggest that we need to rethink our understanding of scaling laws before blindly deploying ever more complex models into real-world environments in which they will be faced with strategic behaviors. We leave many avenues of future work open, including questions about generalization and finite sample considerations, as well as the potential for more sophisticated algorithmic approaches to model selection.
\newpage 

\bibliographystyle{unsrt}  
\bibliography{NeurIPS}

\newpage
\appendix

\section{Learner-Environment Interactions}

\subsection{Description of learner environment settings}
\label{app:env_description}
We begin by providing a concrete description of each of the learner -environment settings we explore:
\begin{enumerate}
    \raggedright
    \item \textbf{Stationary Environments:} The environment has only one action (i.e., $\mathcal{E}=\{e\}$), and the problem reduces to that of classical ML. The resulting equilibrium is simply the minimum of the learner's loss given $e$ and the model class $\Theta_i$: \[\theta^*=\arg\min_{\theta \in \Theta_i} f_l(\theta,e).\]
    
    \item \textbf{Stackelberg Environments - Learner Leads:} The equilibrium outcome is the Stackelberg equilibrium of the two-player game under the assumption that the learner \emph{leads}. This is, for example, the setup adopted in strategic classification~\cite{hardt16strategic}. The equilibrium is a joint strategy $(\theta^*,e^*)$ such that: \begin{align*}\theta^* = &\argmin_{\theta \in \Theta_i}f_l(\theta, BR_e(\theta)),\end{align*}
    and $e^*=BR_e(\theta^*)= \argmin_{e\in \mathcal{E}}f_{e}(\theta^*, e)$.
     
     \item \textbf{Stackelberg Environments - Learner Follows:} The equilibrium outcome is the Stackelberg equilibrium of the two-player game under the assumption that the learner \emph{follows}. This is, for example, the case that arises when agents attempt to perform data poisoning attacks~\cite{jagielski18regression} or engage in collective action~\cite{collective_action} against the learner. Here the equilibrium is a joint strategy $(\theta^*,e^*)$ such that: \begin{align*}e^* = &\argmin_{e \in \mathcal{E}}f_e(BR_l(e), e),\end{align*}
     and $\theta^*=BR_l(e^*)= \argmin_{\theta\in \Theta}f_{l}(\theta, e^*)$.
    \item \textbf{General Nash Environments:} Which allows us to model the general case when the interaction results in a Nash equilibrium. This is, for example, the desired solution in MARL~\cite{Zhang2021} and participation and regression games ~\cite{dean2023emergent,jagadeesan_improved_2023}. In this setting, the equilibrium outcome is a joint strategy $(\theta, e)$ such that:
    \begin{align*}
        f_{e}(\theta, e') &\geq f_{e}(\theta, e) \ \  \forall \ e' \in \mathcal{E},\\
        f_{l}(\theta', e) &\geq f_{l}(\theta, e)  \ \  \forall \ \theta' \in \Theta_i.
    \end{align*}
    We remark that in this last case, the assumption of a two-player game is made for simplicity, and our results would go through in $n$-player games.
\end{enumerate}

\subsection{Stationary environments and Stackelberg games with the learner leading}
\label{App: Stationary}
 We investigate the two cases in which performance monotonically increases as a function of complexity: stationary environments and Stackelberg interactions where the learner has commitment power (i.e., the learner ``leads"). This fact follows from the elementary observation that in both of these regimes, the learner simply solves the same optimization problem over a larger space.
\begin{proposition}
\label{prop: Stationary}
    Let $\sA$ be an ordered set. Consider two model classes, $\Theta_i, \Theta_j \in \sA$ with $i < j$. If $(\theta_i, e_i)$ and $(\theta_j, e_j)$ are equilibrium outcomes in stationary environments or Stackelberg environments in which the learner leads with the instantiated model classes being $\Theta_i$ and $\Theta_j$ respectively, then $f_l(\theta_i, e_i) \geq f_l(\theta_j, e_j)$  
\end{proposition}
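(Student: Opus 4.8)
The plan is to observe that in both regimes the environment's action is pinned down entirely by the learner's choice, so the learner's equilibrium problem collapses to a single minimization of an induced objective over its model class; the claim then follows from the elementary fact that the infimum of a fixed function over a larger set can only be smaller. Since $\sA$ is ordered and $i<j$, the defining property gives $\Theta_i \subseteq \Theta_j$, which is the nesting we will exploit throughout. I would treat the two regimes in turn, in each case exhibiting a scalar function of $\theta$ alone whose constrained minimum equals the learner's equilibrium loss.

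For the stationary case, the environment is a singleton $\mathcal{E}=\{e\}$, so by the definition of the equilibrium outcome $\theta_i \in \arg\min_{\theta \in \Theta_i} f_l(\theta,e)$ and $\theta_j \in \arg\min_{\theta \in \Theta_j} f_l(\theta,e)$, with $e_i=e_j=e$. Hence $f_l(\theta_i,e_i)=\min_{\theta\in\Theta_i}f_l(\theta,e)$ and $f_l(\theta_j,e_j)=\min_{\theta\in\Theta_j}f_l(\theta,e)$. Because $\Theta_i\subseteq\Theta_j$, minimizing the same function $f_l(\cdot,e)$ over the larger feasible set can only decrease (or leave unchanged) the optimal value, giving $f_l(\theta_j,e_j)\le f_l(\theta_i,e_i)$, which is the desired inequality.

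For the Stackelberg case in which the learner leads, I would define the induced value objective $V(\theta) := f_l(\theta, BR_e(\theta))$, where $BR_e(\theta)=\arg\min_{e\in\mathcal{E}}f_e(\theta,e)$ is the environment's best response. The equilibrium definition gives $\theta_i\in\arg\min_{\theta\in\Theta_i}V(\theta)$ with $e_i=BR_e(\theta_i)$, so $f_l(\theta_i,e_i)=V(\theta_i)=\min_{\theta\in\Theta_i}V(\theta)$, and analogously $f_l(\theta_j,e_j)=\min_{\theta\in\Theta_j}V(\theta)$. Applying the same monotonicity of the minimum over the nested sets $\Theta_i\subseteq\Theta_j$—now to the fixed scalar function $V$ rather than to $f_l(\cdot,e)$—again yields $f_l(\theta_j,e_j)\le f_l(\theta_i,e_i)$.

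The only point requiring care—the main ``obstacle,'' though it is a mild one—is to justify that the reduction to a single-variable minimization is legitimate, i.e., that $V$ is a well-defined function on all of $\Theta_j$ (indeed on $\Omega$). This requires the environment's best response $BR_e(\theta)$ to exist and be suitably well-defined for every learner action in the larger class; under the convexity and regularity conditions maintained elsewhere in the paper (in particular strong monotonicity, which guarantees a unique environment best response) this holds, and the argument goes through verbatim. If $BR_e$ were set-valued one would instead define $V$ through the leader's Stackelberg value and repeat the nesting argument; the monotonicity conclusion is unaffected.
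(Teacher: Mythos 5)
Your proof is correct and follows essentially the same route as the paper's: both arguments reduce each regime to minimizing a fixed induced objective (namely $f_l(\cdot,e)$ in the stationary case and $f_l(\cdot,BR_e(\cdot))$ in the learner-leads case) over the nested sets $\Theta_i \subseteq \Theta_j$, and conclude by monotonicity of the minimum under set inclusion. Your added remark on the well-definedness of $BR_e$ is a reasonable technical caveat that the paper leaves implicit, but it does not change the substance of the argument.
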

\begin{proof}
For a stationary game, the proposition above follows naturally. We know that $e_i = e_j$ which we denote $e^*$. Since $\theta_i \in \Theta_i \subseteq \Theta_j$, the equilibrium $(\theta_j, e^*)$ necessarily must be such that $f_l(\theta_j, e^*) \leq f_l(\theta_i, e^*)$, otherwise $(\theta_j, e^*)$ is not an equilibrium point. 

For a Stackelberg game where the learner leads, the proposition follows the same argument. According to Stackelberg dynamics, we know that $BR(\theta_i) = e_i$ and $BR(\theta_j) = e_j$. We can see that it must be the case that $f_l(\theta_j, e_j) \leq f_l(\theta_i, e_i)$ otherwise simply selecting $\theta_i$ when optimizing in $\Theta_j$ would be a profitable deviation.    
\end{proof}

While this proposition is trivial to prove, it has implications for adversarial machine learning and strategic classification. Adversarial learning can be modeled as zero-sum or min-max games ~\cite{madry_adversarial}. In these games, if the Nash equilibrium exists, it coincides with the Stackelberg equilibria via simple min-max theorems~\cite{Fudenberg1991}. This implies that the basic intuition of scaling laws holds true for adversarial learning. Similar takeaways hold true for strategic classification because it is commonly modeled as a Stackelberg game in which the learner leads.
\subsection{Stackelberg games where the learner follows}
\label{app:learner_follows}
We also consider the case in which the environment results in a Stackelberg equilibrium where the learner follows. In lieu of a general result for this case, we construct a simple example in strategic linear regression that highlights the fact that when the learner follows in a Stackelberg game---as happens in settings such as collective action and non-adversarial backdoor attacks in machine learning--- the use of more features can actually hurt. 

\begin{figure}[h]  
  \centering
  \includegraphics[width=0.44\textwidth]{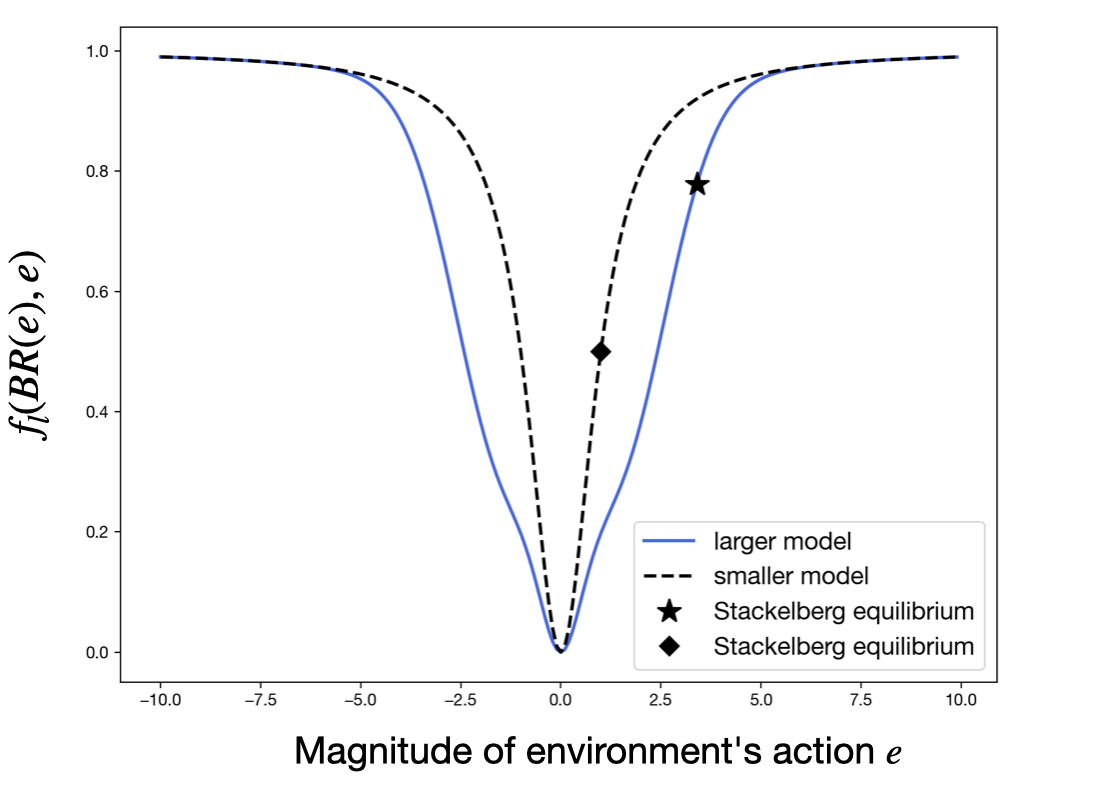}
  \caption{The loss for the learner at their best response in a regression game as the magnitude of the environment's perturbation vector varies with the payoffs achieved at equilibrium as derived in Proposition~\ref{prop:linreg}.}
  \label{fig:linear_regression}
\end{figure}
\paragraph{Example 3: Strategic Linear Regression} In this setup, we have a decision maker who would like to solve a regression problem and is choosing between different feature mappings they can use to fit a model. More precisely, we consider a learner deciding between whether to pick $\phi_\theta^1(x) = \theta^Tx$ or $\phi_\theta^2(x) = \theta_1^Tx + \theta_2\exp(- \| x \|^2)$ as the function to use for the regression task. The classifier's goal is to learn $\theta$ so as to minimize the expected squared error. In this framework, the environment can add a deviation $e$ to the dataset. This would mean that the input to the regressor model would be $x + e$.  

Consider the model classes $\Theta_{\phi_\theta^1(x)}, \Theta_{\phi_\theta^2(x)}$ derived from $\phi_\theta^1(x), \phi_\theta^2(x)$ respectively. We show that despite $\Theta_{\phi_\theta^1(x)} \subset \Theta_{\phi_\theta^2(x)}$, at a Stackelberg equilibrium, the learner has a higher payoff when they learn from $\Theta_{\phi_\theta^1(x)}$ as opposed to $\Theta_{\phi_\theta^2(x)}$. This is in stark contrast to what happens in stationary environments in which adding features never hurts performance since they can just be given a weight of $0$. More concretely:

\begin{proposition}
\label{prop:linreg}
Consider a dataset where each data point $x \in \mathbb{R}^d$ is drawn from a distribution $\mathcal{D}$. A learner has the option to select one of two model classes:
\begin{align*}\Theta_{\phi_\theta^1(x)} &= \{ \phi_\theta^1(x) : \theta \in \mathbb{R}^d \; \text{where: } \phi_\theta^1(x) = \theta^T x \}\\
\Theta_{\phi_\theta^2(x)}&= \{\phi_\theta^2(x): \theta_1, \theta_2 \in \mathbb{R}^d \times \mathbb{R}\\ &\quad \text{where: } \phi_\theta^2(x) = \theta_1^T x + \theta_2 \exp(- \| x \|^2) \}.
\end{align*}
Assume that each data point can be perturbed by an error vector $e \in C \subseteq \mathbb{R}^d$, resulting in the learner observing $x + e$ instead of $x$. For some compact  convex $C$, distribution $\mathcal{D}$ and dimension $d$, the Stackelberg equilibrium attained by optimizing over $\Theta_{\phi_\theta^1(x)}$ results in a strictly lower loss than that attained by optimizing over $\Theta_{\phi_\theta^2(x)}$.
\end{proposition}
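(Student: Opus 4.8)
The plan is to prove this existence statement by exhibiting a single explicit instance, so I am free to choose the most tractable $\mathcal{D}$, $C$, and $d$; I will take $\mathcal{D}$ to be finitely supported, so that every expected squared error is an explicit finite sum, and $C$ a small box $[-B,B]^d$, so that all best responses and the environment's objective can be evaluated in closed form. The first and most important structural observation---which dictates the entire construction---is that the game \emph{cannot} be zero-sum. Writing $g_i(e) = \min_{\theta \in \Theta_{\phi_\theta^i}} f_l(\theta, e)$ for the learner's best-response loss under model class $i$ against a fixed perturbation $e$, the inclusion $\Theta_{\phi_\theta^1} \subset \Theta_{\phi_\theta^2}$ gives $g_2(e) \le g_1(e)$ for \emph{every} $e$. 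If the environment were purely adversarial ($f_e = -f_l$), the learner-follows equilibrium loss would be $\max_{e \in C} g_i(e)$, and pointwise domination would force $\max_e g_2(e) \le \max_e g_1(e)$, so the richer class would be weakly better. Hence the phenomenon requires a genuinely non-adversarial environment objective $f_e \neq -f_l$, mirroring the non-Pareto-optimality hypothesis of Theorem~\ref{main_theorem_pareto}.

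Second, I would compute the learner's best response in closed form for each class. For $\Theta_{\phi_\theta^1}$ the inner problem is ordinary least squares on the perturbed inputs $\{x+e\}$, yielding $BR_l(e)$ as a linear-algebraic function of $e$; for $\Theta_{\phi_\theta^2}$ it is least squares with the additional regressor $\exp(-\|x+e\|^2)$. The essential qualitative difference to engineer is that the radial feature lets the learner's fit develop a \emph{localized bump} near wherever the perturbed mass concentrates, whereas the linear class can only apply a global tilt. I will then design $f_e$ so that the environment has a target it wants the learned predictor to hit together with a cost on $\|e\|$: under the flexible class the environment can achieve its target with a small, cheap perturbation that nonetheless distorts the fit away from the clean optimum, while under the linear class the same effect would require a large, costly perturbation the environment declines to make, leaving the learner's fit near-optimal.

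Third, with the best responses in hand I would solve the environment's outer Stackelberg problem $e_i^* = \argmin_{e \in C} f_e(BR_l(e), e)$ for each class. Existence of both equilibria follows from compactness of $C$ together with continuity of the (unique, by full-rank design matrices) best responses, so each outer objective is continuous on a compact set. I would then evaluate the learner's true loss $f_l(\theta_i^*, e_i^*)$ at the two equilibria and verify $f_l(\theta_1^*, e_1^*) < f_l(\theta_2^*, e_2^*)$ by direct substitution, tuning $B$ and the support of $\mathcal{D}$ to make the gap strictly positive.

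The main obstacle is the second step: because the radial feature enters $g_2$, and hence $f_e$, nonlinearly in $e$, neither the best-response map $e \mapsto \theta(e)$ nor the environment's induced objective has a clean global form, so I cannot simply read off the optimizers. My strategy to control this is to keep the support of $\mathcal{D}$ and the box $C$ small enough that a first-order analysis around $e = 0$ pins down the signs of the relevant derivatives, locating $e_2^*$ in a region that is provably worse for the learner while a symmetry or monotonicity argument confines $e_1^*$ to a milder region. Verifying that these approximations are tight enough to yield a \emph{strict} separation, rather than a merely non-strict one, is where the quantitative care is needed, and it is exactly what the explicit finite-support choice for $\mathcal{D}$ is designed to make checkable.
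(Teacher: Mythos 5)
Your opening structural observation is correct and important: in the learner-follows Stackelberg setting, an adversarial environment ($f_e=-f_l$) would yield equilibrium learner loss $\max_{e\in C}\min_{\theta\in\Theta_i}f_l(\theta,e)$, and the pointwise domination $g_2(e)\le g_1(e)$ coming from $\Theta_{\phi_\theta^1}\subset\Theta_{\phi_\theta^2}$ would make the richer class weakly better, so any reversal requires a non-zero-sum environment objective. This matches the mechanism the paper actually uses: there the environment is not adversarial at all, but chooses $e$ to \emph{maximize the expected prediction} $\mathbb{E}[\phi^i_{\theta^*}(x+e)]$, with $e$ constrained to a segment in the direction of the true parameter $\beta$.

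There is nevertheless a genuine gap: the proposition is an existence claim (``for some compact convex $C$, distribution $\mathcal{D}$ and dimension $d$\dots''), so essentially all of the content of a proof is an explicit instance together with a verified strict inequality, and your proposal never produces one. You do not write down $f_e$ (only ``a target \dots together with a cost on $\|e\|$''), you do not fix $\mathcal{D}$ or $C$ beyond qualitative constraints, you do not compute either Stackelberg equilibrium, and the step you yourself flag as the main obstacle --- controlling the nonlinear dependence of the radial regressor on $e$ well enough to locate $e_2^*$ and certify a strict gap --- is addressed only by the promise that a first-order analysis near $e=0$ ``pins down the signs.'' Whether that analysis closes (whether the environment's marginal benefit at $e=0$ really exceeds its marginal cost under $\Theta_{\phi_\theta^2}$ but not under $\Theta_{\phi_\theta^1}$, and whether the resulting equilibrium losses satisfy $g_1(e_1^*)<g_2(e_2^*)$ strictly) is exactly the question, and it is left open. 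By contrast, the paper's proof in Appendix~\ref{proof_regression} commits to a concrete instance --- $\mathcal{D}=\mathcal{N}(0,I)$, $y=\beta^Tx$, environment maximizing $\mathbb{E}[\phi^i_{\theta^*}(x+e)]$ over $C=\{k\beta/\|\beta\|:k\in[-10,10]\}$, $d=2$ --- for which Gaussian integrals give closed-form best responses for both classes; solving the outer problem then gives $k=1$ with learner loss $\tfrac{1}{2}\|\beta\|^2$ for the small class, versus $k\approx 3.4$ with loss $\approx 0.78\|\beta\|^2$ for the large class. Your finite-support route could plausibly be completed (finite sums would replace the Gaussian integrals), but until the instance is exhibited and the two equilibria are actually compared, what you have is a strategy, not a proof.
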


The calculations for this proposition are in the Appendix~\ref{proof_regression}. Figure \ref{fig:linear_regression} highlights the non-monotonicity of performance between different equilibria in the two spaces. We see that at the Stackelberg equilibrium, the larger model class incurs a greater loss than the smaller model class despite the fact that the loss incurred by the larger model class is lower than that incurred by, the lower model class in a point-wise sense.

\section{Additional proofs}
\subsection{Proof of Theorem ~\ref{main_theorem_pareto}}
\label{app:proof1}
We begin by revisiting the main assumptions of this proof:

\begin{assumption}[Restatement of Assumption \ref{game_assumptions}]
Assume the game defined on $f_e$ and $f_l$ is strongly monotone on $\Omega \times \mathcal{E}$. Further assume that 
\begin{enumerate}
    \item $f_l$ and $f_e$ are jointly convex in $\theta$ and $e$.
    \item The gradient mappings, $\nabla f_l \text{ and } \nabla f_e$ exist and are well defined for all $(\theta, e)$. Furthermore, the gradient mappings are $L$-Lipschitz continuous in the joint action space.
    \item The Nash equilibrium $\theta^* \in \Theta$ is on the interior of $\Theta$ with $\nabla_\theta BR_e(\theta^*)\ne0$.
\end{enumerate}
\end{assumption}

\begin{theorem}[Restatement of Theorem ~\ref{main_theorem_pareto}]
    For a two-player monotone game $G$ on $\Theta \times \mathcal{E}$ which satisfies Assumption \ref{game_assumptions}, if the unique Nash equilibrium in $\Theta\times \mathcal{E}$ is \emph{not} Pareto optimal then there exists a restriction of the learner's model class (i.e.,  a set $\Theta' \subset \Theta$) such that the restricted game $G'$ on $\Theta' \times \mathcal{E}$ admits a Nash equilibrium $(\theta', e')$ with: $f_l(\theta', e') < f_l(\theta^*, e^*).$
\end{theorem}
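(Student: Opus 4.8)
The plan is to reduce the claim to a one-player \emph{commitment} (Stackelberg-leader) problem along the environment's best-response curve. Observe that if we take the restriction to be a singleton $\Theta' = \{\theta'\}$ (a legitimate subset of $\Theta$), then in the restricted game the learner has no choice but to play $\theta'$, and the environment's only equilibrium condition is to best respond, so the unique Nash of $G'$ is $(\theta', BR_e(\theta'))$ and the learner's equilibrium loss is $g(\theta') := f_l(\theta', BR_e(\theta'))$. Since $g(\theta^*) = f_l(\theta^*, BR_e(\theta^*)) = f_l(\theta^*, e^*)$, it suffices to exhibit a $\theta' \in \Theta$ near $\theta^*$ with $g(\theta') < g(\theta^*)$; that is, to show that $\theta^*$ is not a local minimizer of the commitment loss $g$.

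First I would record the first-order facts at the equilibrium. Because $\theta^*$ is interior (Assumption~\ref{game_assumptions}.3), the learner's stationarity gives $\nabla_\theta f_l(\theta^*, e^*) = 0$. Next I claim non-Pareto-optimality forces $\nabla_e f_l(\theta^*, e^*) \neq 0$: if it were zero then $\nabla f_l(\theta^*,e^*)=0$ would hold in both blocks, and since $f_l$ is jointly convex (Assumption~\ref{game_assumptions}.1), $(\theta^*, e^*)$ would be a global minimizer of $f_l$, contradicting the existence of a point with strictly smaller learner loss guaranteed by the failure of Pareto optimality. Using that $f_e$ is strongly convex in $e$, the map $BR_e$ is well defined and $C^1$ by the implicit function theorem, and differentiating $g$ while invoking $\nabla_\theta f_l(\theta^*, e^*) = 0$ yields
\[
\nabla_\theta g(\theta^*) = \big[\nabla_\theta BR_e(\theta^*)\big]^{\top}\, \nabla_e f_l(\theta^*, e^*).
\]

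The crux is to show this gradient is nonzero, after which the argument closes immediately: one sets $\theta' = \theta^* - t\,\nabla_\theta g(\theta^*)$ for small $t>0$ (feasible since $\theta^*$ is interior), obtains $g(\theta') < g(\theta^*)$ by a first-order Taylor expansion, and takes $\Theta' = \{\theta'\}$ so that the restricted Nash $(\theta', BR_e(\theta'))$ satisfies $f_l(\theta', BR_e(\theta')) = g(\theta') < f_l(\theta^*, e^*)$. Establishing $\nabla_\theta g(\theta^*) \neq 0$ is the main obstacle. Via the implicit function theorem one has $\nabla_\theta BR_e(\theta^*) = -(\nabla^2_{ee} f_e)^{-1}\nabla^2_{\theta e} f_e$, with $\nabla^2_{ee}f_e \succ 0$ by strong convexity, so Assumption~\ref{game_assumptions}.3 is exactly the statement that the cross-Hessian $\nabla^2_{\theta e} f_e$ does not vanish, i.e.\ that the players genuinely interact. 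In the scalar-interaction case both factors in the displayed product are nonzero numbers and $\nabla_\theta g(\theta^*)\neq 0$ is immediate; in higher dimensions one additionally needs the nonzero vector $\nabla_e f_l(\theta^*,e^*)$ to avoid the kernel of $[\nabla_\theta BR_e(\theta^*)]^{\top}$, a non-degeneracy condition that I would either fold into Assumption~\ref{game_assumptions}.3 or argue holds generically. As a sanity check on the direction, the witness $(\hat\theta,\hat e)$ certifying non-Pareto-optimality produces, by joint convexity, a joint descent direction along which $f_l$ strictly decreases while $f_e$ does not increase, which is consistent with the learner being able to steer the environment's best response toward lower loss.
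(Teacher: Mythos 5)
Your proof follows essentially the same route as the paper's: both arguments reduce the theorem to showing that the commitment loss $\bar f_l(\theta) := f_l(\theta, BR_e(\theta))$ has nonvanishing gradient at $\theta^*$ --- using interiority to get $\nabla_\theta f_l(\theta^*,e^*)=0$, and non-Pareto-optimality plus joint convexity to force $\nabla_e f_l(\theta^*,e^*)\neq 0$ --- and then stepping a small distance in a descent direction to obtain $\theta'$ with $f_l(\theta',BR_e(\theta')) < f_l(\theta^*,e^*)$. Two differences are worth noting. First, your restricted class is the singleton $\Theta'=\{\theta'\}$, which trivially makes $(\theta',BR_e(\theta'))$ the Nash equilibrium of $G'$; the paper instead intersects $\Theta$ with two half-spaces (one cutting out $\theta^*$, one enforcing the first-order optimality of $\theta'$ against $e'$ over the restricted set), producing a full-dimensional restriction. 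Both satisfy the theorem as literally stated; yours is simpler, while the paper's restriction is less degenerate as a ``model class.'' Second, on the crux --- why $[\nabla_\theta BR_e(\theta^*)]^\top \nabla_e f_l(\theta^*,e^*)\neq 0$ --- you are actually more careful than the paper: the paper asserts the product is nonzero because neither factor is zero, which is invalid for matrix--vector products in dimension greater than one, and this is precisely the kernel obstruction you flag. Your suggestion to fold the non-degeneracy condition $\nabla_e f_l(\theta^*,e^*)\notin\ker\bigl([\nabla_\theta BR_e(\theta^*)]^\top\bigr)$ into Assumption~\ref{game_assumptions}.3 (or to argue it holds generically) is the correct repair; since the published proof has the same hole, this gap is inherited rather than introduced, and your proposal matches the paper's argument in both strength and weakness.
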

\begin{proof}[Proof of Theorem ~\ref{main_theorem_pareto}]
Note the following: because we can achieve strictly lower loss for the jointly convex loss function $f_l$, we know that $\nabla f_l(\theta^*, e^*) \neq 0$

Consider the following the function $\bar{f}_l(\theta) := f_l(\theta, BR_e(\theta))$. Here $BR_e(\theta)$ is the best response of the other player to the action $\theta$, i.e., $BR_e(\theta) := \argmin_{e \in \mathcal{E}} f_e(\theta, e)$.

Relying on this function definition, we describe a particular restriction on the model class $\Theta$ and then find a Nash equilibrium for this model class, which is different from $(\theta^*, e^*)$ and whose loss for the $\Theta-$player is lower in this new equilibrium.  

Realize that $\nabla \bar{f}_l(\theta^*) = \nabla_\theta f_l(\theta^*, BR_e(\theta^*)) + \nabla_\theta BR_e(\theta^*) \nabla_e f_l(\theta^*, BR_e(\theta^*))  \neq 0$. 
In particular, we note that $\nabla_\theta BR_e(\theta^*) \cdot \nabla_e f_l(\theta^*, BR_e(\theta^*)) \neq 0$ since if it were, it would mean that either $\nabla_\theta BR_e(\theta^*)$ or $\nabla_e f_l(\theta^*, BR_e(\theta^*))$ were equal to zero which is not the case. $\nabla_e f_l(\theta^*, BR_e(\theta^*)) = 0$, would mean that both $\nabla_\theta f_l(\theta^*, BR_e(\theta^*))$ and $\nabla_e f_l(\theta^*, BR_e(\theta^*)) = 0$ implying a global minimum which would contradict the existence of a Pareto improving point. $\nabla_\theta BR_e(\theta^*) \neq 0$ follows from the Assumption \ref{game_assumptions}. 

Noting the fact that $\nabla \bar{f}_l(\theta^*) \neq 0$ allows us to pick a direction with respect to the inner product with $\nabla \bar{f}_l(\theta^*)$, let $v$ be any vector $\in \mathbb{R}^{d_\theta}$ such that $\langle v, \nabla \bar{f}_l(\theta^*) \rangle > 0$. Consider $\theta' = \theta^* - \delta v$ for some $\delta > 0$. Let $e' = BR_e(\theta')$. We will now define $\Theta' \subset \Theta$ such that the game on the model class $\Theta' \times \mathcal{E}$ has $(\theta', e')$ as a Nash equilibrium 

Let $\tilde{\Theta} = \{ \theta \in \Theta : \langle \theta - \theta', v \rangle \leq 0 \}$. We then go on to define
$\Theta' = \{ \theta \in \tilde{\Theta} : \langle \nabla_\theta f_l(\theta', BR_e(\theta')), \theta' - \theta \rangle \leq 0 \}$. Notice how the first step removes $(\theta^*, e^*)$ from the construction. The second step makes it such that $BR_\theta(e') = \theta'$ for all $\theta \in \Theta'$. Since $e'$ is the best response to $\theta'$ we get that the point $(\theta', e')$ is a Nash equilibrium.

What is left to show is that we can create such a restriction with the characteristic that the $\Theta$-player's loss function is lowered at the new equilibrium. To do this, we rely on the choice of the $\delta$ parameter.

\begin{claim}
    There exists a value of $\delta > 0$ such that $f_l(\theta', e') < f_l(\theta^*, e^*)$
\end{claim}
To see this, consider the Taylor expansion of $\bar{f}_l(\theta')$ at $\theta^*$. 
\begin{align*}
  \bar{f}_l(\theta') &= \bar{f}_l(\theta^*) + \langle \nabla\bar{f}_l(\theta^*), \theta' - \theta^* \rangle + \frac{1}{2}(\theta' - \theta^*)^TH_\theta(\bar{\theta})(\theta' - \theta^*)  \\
  &\leq \bar{f}_l(\theta^*) + \langle \nabla\bar{f}_l(\theta^*), \theta' - \theta^* \rangle  + \frac{L}{2}\|\theta' - \theta^*\|^2 \\
  &= \bar{f}_l(\theta^*) - \delta \langle \nabla\bar{f}_l(\theta^*), v \rangle  + \delta^2\frac{L}{2}\|v\|^2   
\end{align*}
In the first line $\bar{\theta} \in \{\theta \in \Theta
: \theta = \lambda \theta^* + (1 - \lambda) \theta', \lambda \in [0, 1] \}$. We
note that by our construction $\langle \nabla\bar{f}_l(\theta^*), v \rangle > 0$. Furthermore, we realize that the term $\delta \langle \nabla\bar{f}_l(\theta^*), v \rangle \in \mathcal{O}(\delta)$ and that $\delta^2\frac{L}{2}\|v\|^2 \in \mathcal{O}(\delta^2)$. This means we can select a value of $\delta$ such that the term $\delta \langle \nabla\bar{f}_l(\theta^*), v \rangle  - \delta^2\frac{L}{2}\|v\|^2$ is positive. With this, we can then deduce that $\bar{f}_l(\theta^*) > \bar{f}_l(\theta')$ which then completes the proof. 
\end{proof}
\subsection{Other proofs in Section \ref{sec:theory}}
\label{App: theory proofs}
We begin with proving the proposition on Strategic classification. Here we restate the assumptions again:
\begin{assumption}[Restatement of Assumption \ref{ass:stratclass}]
     Let $|\mathcal{Y}|=n$. Assume that the Bayes optimal classifier on $\mathcal{X}^*$ for $\mathcal{P}_0$ denoted $g^*_r(x)=\arg\max_{y \in \mathcal{Y}} \mathcal{P}(y|\phi^*(x))$ satisfies: \[f_l(g_r^*,\mathcal{P}_0)=Pr(g_r^*(x)=y)<\frac{1}{n}\]
 \end{assumption}
 \begin{proposition}
[Restatement of Proposition \ref{prop: strat response}]
Under Assumption~\ref{ass:stratclass}, consider two functions classes over which the learner can optimize, $\Omega$, and $\Theta$ which is the set of all functions from $g_r:\mathcal{X}^*\rightarrow \mathcal{Y}$, where $\mathcal{X}^*=\phi^*(\mathcal{X})$ such that $\phi^*(x)=x$ for $x \in \mathcal{X}^*$. Consider the corresponding games denoted $G$ and $G^*$ respectively. Then the Nash equilibrium in $G$ is $(g^*,\mathcal{P}^*)$ where $\mathcal{P}^*=(1-\alpha)\mathcal{P}_0+\alpha U$ and the Nash equilibrium in $G^*$ is given by $(g_r^*,\mathcal{P}_0)$. Furthermore, there exists a range of $\alpha \in (0,1)$ such that:
\[ f_l(g^*,\mathcal{P}^*)>f_l(g_r^*,\mathcal{P}_0)\]
\end{proposition}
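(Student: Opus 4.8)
The plan is to pin down the two Nash equilibria by separately solving each player's best-response problem, and then to compare the learner's loss across the two games as a function of the mixing parameter $\alpha$. The structure of the argument is driven by the environment's piecewise utility: the environment pays $TV(\mathcal{P}_e, U)$ when the learner's classifier depends on the protected features (i.e.\ $\Pr(g(x)\neq g(\phi^*(x)))>0$) and $TV(\mathcal{P}_e,\mathcal{P}_0)$ otherwise. First I would analyze the unrestricted game $G$. Since the learner optimizing over $\Omega$ will in general use the full feature set, its Bayes-optimal response $g^*$ genuinely depends on features outside $\mathcal{X}^*$, triggering the first branch of $f_e$. Minimizing $TV(\mathcal{P}_e, U)$ drives the environment to report $\mathcal{P}_e = U$, so the observed distribution becomes $\mathcal{P}^* = (1-\alpha)\mathcal{P}_0 + \alpha U$; I would check that $(g^*, \mathcal{P}^*)$ is mutually consistent (the learner's Bayes-optimal classifier for $\mathcal{P}^*$ still uses the rich features, closing the fixed-point loop).

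Next I would analyze the restricted game $G^*$, where the learner is confined to $\Theta$, functions of $\mathcal{X}^*$ only. Because $g_r$ is by construction a function of $\phi^*(x)$, we have $g_r(x) = g_r(\phi^*(x))$, so $\Pr(g_r(x)\neq g_r(\phi^*(x))) = 0$ and the environment lands in the \emph{second} branch. Minimizing $TV(\mathcal{P}_e,\mathcal{P}_0)$ sets $\mathcal{P}_e = \mathcal{P}_0$, hence the observed distribution is exactly $\mathcal{P}_0$ and the learner's best response is $g_r^*$, the Bayes-optimal classifier on $\mathcal{X}^*$ for $\mathcal{P}_0$. This gives the equilibrium $(g_r^*, \mathcal{P}_0)$ and again I would verify mutual consistency.

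With both equilibria identified, the final step is the loss comparison $f_l(g^*,\mathcal{P}^*) > f_l(g_r^*,\mathcal{P}_0)$ for a range of $\alpha$. On the restricted side, the loss is fixed at the constant $f_l(g_r^*,\mathcal{P}_0)$, which by Assumption~\ref{ass:stratclass} is strictly better than random guessing $1/n$ (reading the assumption's accuracy/error convention so that the restricted classifier beats chance). On the unrestricted side, the learner faces $\mathcal{P}^* = (1-\alpha)\mathcal{P}_0 + \alpha U$: as $\alpha \to 1$ the distribution approaches pure uniform noise, on which \emph{no} classifier can beat $1-1/n$ error, so $f_l(g^*,\mathcal{P}^*) \to$ the chance-level loss. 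I would show $f_l(g^*, \mathcal{P}^*)$ is continuous and monotone in $\alpha$ (the mixture degrades the conditional label information linearly), so there is a threshold $\bar\alpha$ above which $f_l(g^*,\mathcal{P}^*)$ exceeds the constant restricted loss; any $\alpha \in (\bar\alpha, 1)$ witnesses the claim.

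\textbf{Main obstacle.} The delicate step is rigorously establishing that $g^*$ on the unrestricted class genuinely triggers the first branch of $f_e$ (so that the environment injects uniform noise) rather than coinciding with a $\phi^*$-measurable classifier that would avoid the penalty; this requires arguing that using the extra features is strictly optimal for the learner against $\mathcal{P}^*$, which is where the fixed-point reasoning must be made airtight. I also expect some care is needed in reconciling the inequality direction in Assumption~\ref{ass:stratclass} (stated with $\Pr(g_r^*(x)=y)<1/n$, apparently an accuracy bound) with the error-based losses $f_l$ used in the comparison, so that the restricted equilibrium is indeed the better of the two.
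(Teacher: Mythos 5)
Your proposal is correct and follows essentially the same route as the paper: the two equilibria are verified exactly as you describe (via the branch structure of $f_e$ together with Bayes-optimality of $g^*$ and $g_r^*$), and the loss comparison rests on the same mixture decomposition of the loss under $\mathcal{P}^* = (1-\alpha)\mathcal{P}_0 + \alpha U$. The only cosmetic difference is that where you argue by continuity and monotonicity in $\alpha$ with the limit $\alpha \to 1$, the paper writes the explicit bound $f_l(g^*,\mathcal{P}^*) \ge (1-\alpha)\min_{g \in \Omega} f_l(g,\mathcal{P}_0) + \alpha/n$ and exhibits the explicit range $\alpha \ge n\, f_l(g_r^*,\mathcal{P}_0)$ (non-empty by Assumption~\ref{ass:stratclass}); the two issues you flag---the accuracy-versus-error convention in Assumption~\ref{ass:stratclass}, and the unproven claim that $g^*$ genuinely depends on the protected features---are indeed present in, and elided by, the paper's own proof as well.
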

\begin{proof}[Proof of Proposition ~\ref{prop: strat response}]
To prove the first part of this proposition, we note that $(g^*,\mathcal{P}^*)$ is a Nash equilibrium in that no player has any incentive to unilaterally deviate given their action sets. Indeed, if $g^*$ is the Bayes-optimal classifier on $P^*$ in $\Omega$ then $Pr(g(x)\ne g(\phi^*(x))>0$ and consequently the environment's best-response perturbation is $\mathcal{P}_e=U$. Similarly, $g_r^*,\mathcal{P}_0$ is the Bayes-optimal classifier on $\mathcal{X}^*$ and satisfies $Pr(g(x)= g(\phi^*(x))=0$ by definition. As such, the environment's best response is given by $\mathcal{P}_e=\mathcal{P}_0$.

To prove the second part of the proof we note that:
\begin{align*}
    f_l(g^*,\mathcal{P}^*)&\ge(1-\alpha) \min_{g\in\Omega} f_l(g,\mathcal{P}_0) +\frac{\alpha}{n}>\frac{\alpha}{n}
\end{align*}
Now, note that for $1\ge \alpha\ge nf_l(g_r^*,\mathcal{P}_0)$, we have that: \[f_l(g^*,\mathcal{P}^*)>f_l(g_r^*,\mathcal{P}_0)\]\end{proof}

\section{Further results on Online Learning for Model Selection in Games}
\label{app: online}

\label{app:proof2}

We now present a proof for convergence for the stochastic gradient descent algorithm. We restate the main assumptions here as well:

\begin{assumption}
[Restatement of Assumption \ref{ass:online}]
Assume the game defined on $f_e$ and $f_l$ is strongly monotone on $\Omega \times \mathcal{E}$. Further, assume that 
\begin{enumerate}
    \item The functions $f_l \text{ and }  f_e$ are $L$-Lipschitz continuous on $\Omega \times \mathcal{E}$. 
    \item The players have access to stochastic gradient estimators such that the estimated monotone mapping $\hat F$ satisfies, $\forall x \in \Omega \times \mathcal{E}$:
    \[ \mathbb{E}[\hat F(x)]=F(x)  \ \ \text{and } \ \ \mathbb{E}[\|\hat F(x)-F(x)\|^2]\le \sigma^2.\]
\end{enumerate}
\end{assumption}
\begin{proposition}[Restatement of Proposition \ref{prop:grad_descent}]
    Let $\Theta$ correspond to a particular model class which results in an instance of continuous action $\mu-$strongly monotone game with a unique Nash Equilibrium $(x^*)$. Under Assumption~\ref{ass:online} and the assumption that all players use stepsize schedule $\eta_t = \frac{2}{\mu(t+ 1)}$, for any $\delta \in (0,1)$ Algorithm 2 yields an estimate $\hat x_T$ such that:
    \[|f_l(\bar{x}) - f_l(x^*)| \leq \mathcal{O}\left(\frac{L^2\log(\frac{1}{\delta}) + L^3}{\mu^2T}\right),\]   
    with probability at least $1-\delta$.
\end{proposition}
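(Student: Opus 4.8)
The plan is to control the squared distance $\|x_t - x^*\|^2$ of the iterates to the unique Nash equilibrium, establish a high-probability $\mathcal{O}(1/t)$ contraction, and then transfer this to the loss gap through the weighted average and the Lipschitz continuity of $f_l$. First I would use nonexpansiveness of the projection $\Pi_{\Theta \times \mathcal{E}}$ to obtain the one-step inequality
\[
\|x_{t+1} - x^*\|^2 \le \|x_t - x^*\|^2 - 2\eta_t \langle \hat F(x_t), x_t - x^*\rangle + \eta_t^2 \|\hat F(x_t)\|^2 .
\]
Decomposing $\hat F(x_t) = F(x_t) + \xi_t$ with $\E[\xi_t \mid \mathcal{F}_{t-1}] = 0$ and $\E[\|\xi_t\|^2 \mid \mathcal{F}_{t-1}] \le \sigma^2$, invoking $\mu$-strong monotonicity together with the variational inequality $\langle F(x^*), x_t - x^*\rangle \ge 0$ characterizing the equilibrium (and $F(x^*) = 0$ at an interior Nash, which also gives $\|F(x_t)\| \le L\|x_t - x^*\|$), I get $\langle F(x_t), x_t - x^*\rangle \ge \mu \|x_t - x^*\|^2$. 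Substituting $\eta_t = 2/(\mu(t+1))$ turns the inequality into a recursion of the form
\[
\|x_{t+1} - x^*\|^2 \le \Big(1 - \tfrac{4}{t+1}\Big)\|x_t - x^*\|^2 - 2\eta_t \langle \xi_t, x_t - x^*\rangle + \eta_t^2 \|\hat F(x_t)\|^2 .
\]

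The main obstacle is that the noise term $\langle \xi_t, x_t - x^*\rangle$ is a martingale difference whose conditional variance is bounded by $\sigma^2 \|x_t - x^*\|^2$, i.e. it is \emph{self-bounding}: the fluctuation scales with the very quantity whose decay I am trying to certify. A crude application of Azuma--Hoeffding or a fixed-variance Freedman bound would forfeit the fast rate. Instead I would adapt the generalized Freedman inequality of \cite{harvey_simple_2019}, which is tailored to martingales whose increments and conditional variances are controlled by the running error, and which is exactly what converts the in-expectation strongly convex analysis into a high-probability one. The generalization needed here is to replace their function-gap progress inequality by the strong-monotonicity progress inequality derived above, so that the argument applies to the monotone game operator $F$ rather than the gradient of a single convex objective; the $L^2\log(1/\delta)$ and $L^3$ terms in the stated bound arise, respectively, from the variance-scaled deviation and the higher-order martingale correction in that inequality. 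Carrying this through yields, with probability at least $1 - \delta$, a per-iterate bound $\|x_t - x^*\|^2 \le \mathcal{O}\big((L^2 \log(1/\delta) + L^3)/(\mu^2 t)\big)$.

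Finally, I would pass to the returned weighted average $\hat x_T = \tfrac{2}{T(T+1)}\sum_{t=1}^T t\, x_t$. By convexity of $\|\cdot - x^*\|^2$ we have $\|\hat x_T - x^*\|^2 \le \tfrac{2}{T(T+1)} \sum_{t=1}^T t\, \|x_t - x^*\|^2$, and the linear weights cancel the $1/t$ decay of the per-iterate bound, leaving an overall $\mathcal{O}(1/T)$ guarantee; a union bound over the iterates keeps the total failure probability at $\delta$, with the $\log(1/\delta)$ absorbing the extra $\log T$ factors through the choice of confidence level. Lipschitz continuity of $f_l$ then transfers this equilibrium-convergence guarantee to the loss, $|f_l(\hat x_T) - f_l(x^*)| \le L \|\hat x_T - x^*\|$, yielding the claimed bound. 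I expect the self-bounding martingale concentration of the second step to be the technical heart of the argument; the recursion and the averaging steps are, by contrast, routine once the progress inequality and the high-probability deviation bound are in place.
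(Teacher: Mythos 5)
Your opening (projection nonexpansiveness, the noise decomposition, strong monotonicity plus the equilibrium variational inequality) and your closing (Jensen over the weighted average, Lipschitz continuity of $f_l$) match the paper's proof, and you correctly identify the self-bounding martingale and the generalized Freedman inequality of Harvey et al.\ as the technical heart. The divergence---and the gap---is in the middle step. You propose to establish a \emph{per-iterate} high-probability bound $\|x_t - x^*\|^2 \le \mathcal{O}\bigl((L^2\log(1/\delta)+L^3)/(\mu^2 t)\bigr)$ and then union bound over the $T$ iterates before averaging. The union bound forces you to run each per-iterate bound at confidence $\delta/T$, so what you actually obtain for the average is $\mathcal{O}\bigl((L^2\log(T/\delta)+L^3)/(\mu^2 T)\bigr)$: the extra $\log T$ does not disappear. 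Your remark that the $\log(1/\delta)$ ``absorbs the extra $\log T$ factors through the choice of confidence level'' is exactly the step that fails, since $\log(T/\delta) = \log T + \log(1/\delta)$ is not $\mathcal{O}(\log(1/\delta))$ uniformly in $T$; as written, you prove a statement strictly weaker than the proposition. (Per-iterate high-probability control is also precisely the lossy regime in Harvey et al.: it is their final-iterate analysis, not their averaged-iterate analysis, that incurs logarithmic overhead.)

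The fix, and what the paper does, is to never form per-iterate bounds at all. Multiply the one-step progress inequality by the weight $t$ \emph{before} doing any concentration; with $\eta_t = 2/(\mu(t+1))$ the weighted distances telescope, via the identity $\frac{\mu}{4}\bigl(t(t-1)\|x_t - x^*\|^2 - t(t+1)\|x_{t+1}-x^*\|^2\bigr)$, so summing over $t$ leaves a pathwise inequality relating $\sum_t t\langle F(x_t), x_t - x^*\rangle$ (hence, by monotonicity and Jensen, $\|\hat x_T - x^*\|^2$) to the \emph{single aggregated} martingale $E_T = \sum_{t} t\langle \hat E(x_t), x_t - x^*\rangle$ plus a deterministic $T(L+1)^2/\mu$ term. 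One application of the generalized Freedman inequality (Lemma 4.1 of Harvey et al.) to $E_T$---this is where the self-bounding structure is exploited, since the conditional variance of $E_T$ is controlled by the very distances being bounded---gives $E_T \le \mathcal{O}\bigl((L/\mu)\,T\log(1/\delta)\bigr)$ with probability $1-\delta$, with no union bound over time and hence no $\log T$. Dividing by $\mu T(T+1)/4$ and applying Lipschitzness yields the claimed $\mathcal{O}\bigl((L^2\log(1/\delta)+L^3)/(\mu^2 T)\bigr)$ rate. So your architecture is sound up to the concentration step, but the concentration must be applied once to the weighted sum rather than $T$ times to the individual iterates.
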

\begin{proof} [Proof of Proposition~\ref{prop:grad_descent}]
Let $x_t = (x_1, \dots , x_n)$ for a fixed number of players. We then define $F(x_t) = 
\begin{bmatrix}
    \nabla_1 u_1(x_t) \\
    \nabla_2 u_2(x_t) \\
    \vdots \\
    \nabla_n u_n(x_t)
\end{bmatrix}$. Let $\hat{F}(x_t) = F(x_t) - \hat{E}(x_t)$, where $\mathbb{E}[\hat{E}(x_t)] = \mathbf{0}$ and $\| \hat{E}(x_t) \|_2 \leq 1$ a.s.. 
From the description of Algorithm 2, we can see that $x_{t+ 1} = \Pi_{\Theta \times \mathcal{E}}(x_t - \eta_t \hat{F}(x_t))$ The proof of this proposition closely mirrors ~\cite{harvey_simple_2019}.  
Let $\bar{x}$ be the output of Algorithm 1. We can see that:
\begin{align*}
    \| x_{t + 1} - x^* \|^2 &= \|\Pi_{\Theta \times \mathcal{E}}(x_t - \eta_t \hat{F}(x_t)) - x^*\|^2 \\
    &\leq \|x_t - \eta_t \hat{F}(x_t) - x^*\|^2 \\   
    &= \|x_t - x^*\|^2 - 2\eta_t \langle \hat{F}(x_t), x_t - x^* \rangle + \eta_t^2 \|\hat{F}(x_t) \|^2 \\  
    &= \|x_t - x^*\|^2 - 2\eta_t \langle \hat{F}(x_t), x_t - x^* \rangle  - 2\eta_t\langle F(x_t), x_t - x^* \rangle + 2\eta_t \langle F(x_t), x_t - x^* \rangle  + \eta_t^2 \|\hat{F}(x_t) \|^2 \\    
    &= \|x_t - x^*\|^2 + 2\eta_t \langle F(x_t) - \hat{F}(x_t), x_t - x^* \rangle  - 2\eta_t\langle F(x_t), x_t - x^* \rangle  + \eta_t^2 \|\hat{F}(x_t) \|^2 
\end{align*}
Rearranging terms we get:
\begin{align*}
    2\eta_t\langle F(x_t), x_t - x^* \rangle &\leq \|x_t - x^*\|^2 - \|x_{t + 1} - x^*\|^2 + 2\eta_t \langle F(x_t) - \hat{F}(x_t), x_t - x^* \rangle  + \eta_t^2 \|\hat{F}(x_t) \|^2 \\
    \langle F(x_t), x_t - x^* \rangle &\leq \frac{\|x_t - x^*\|^2 - \|x_{t + 1} - x^*\|^2}{2\eta_t} + \langle \hat{E}(x_t), x_t - x^* \rangle + \frac{\eta_t}{2} \|\hat{F}(x_t) \|^2\\
    \frac{1}{2}\langle F(x_t), x_t - x^* \rangle &\leq \frac{\|x_t - x^*\|^2 - \|x_{t + 1} - x^*\|^2}{2\eta_t} - \frac{1}{2}\langle F(x_t), x_t - x^* \rangle + \langle \hat{E}(x_t), x_t - x^* \rangle + \frac{\eta_t}{2} \|\hat{F}(x_t) \|^2\\
    \langle F(x_t), x_t - x^* \rangle &\leq 2 \cdot \left( \frac{\|x_t - x^*\|^2 - \|x_{t + 1} - x^*\|^2}{2\eta_t} - \frac{\mu}{2}\|x_t - x^*\|^2 + \langle \hat{E}(x_t), x_t - x^* \rangle + \frac{\eta_t}{2} \|\hat{F}(x_t) \|^2 \right)\\
    t\langle F(x_t), x_t - x^* \rangle &\leq 2t \cdot \left( \frac{\|x_t - x^*\|^2 - \|x_{t + 1} - x^*\|^2}{2\eta_t} - \frac{\mu}{2}\|x_t - x^*\|^2 + \langle \hat{E}(x_t), x_t - x^* \rangle + \frac{\eta_t}{2} \|\hat{F}(x_t) \|^2 \right)\\
    &= 2 \cdot \left( t\left(\frac{1}{2\eta_t}  - \frac{\mu}{2}\right)\|x_{t} -  x^*\|^2 - \frac{t}{2\eta_t}\|x_{t + 1} -  x^*\|^2 + t\langle \hat{E}(x_t), x_t - x^* \rangle + t\frac{\eta_t}{2} \|\hat{F}(x_t) \|^2 \right)\\
    &= 2 \cdot \left( \frac{\mu}{4} \cdot \left( t(t - 1) \| x_t - x^*\|^2 - t(t + 1) \|x_{t + 1} - x^*\|^2 \right) + t\langle \hat{E}(x_t), x_t - x^* \rangle + \frac{t}{\mu(t + 1)} \|\hat{F}(x_t) \|^2 \right)\\
    &\leq 2 \cdot \left( \frac{\mu}{4} \cdot \left( t(t - 1) \| x_t - x^*\|^2 - t(t + 1) \|x_{t + 1} - x^*\|^2 \right) + t\langle \hat{E}(x_t), x_t - x^* \rangle + \frac{(L + 1)^2}{\mu} \right)
\end{align*}
We then sum over all $t$ and noting that the right-hand side has a portion with a telescoping sum, we see that:
\begin{align*}
    \sum\limits_{t = 1}^T t\langle F(x_t), x_t - x^* \rangle &\leq 2 \cdot \left( \sum\limits_{t = 1}^Tt\langle \hat{E}(x_t), x_t - x^* \rangle + \frac{T \cdot (L + 1)^2}{\mu} \right) \\
    \frac{1}{T(T + 1)/2}\sum\limits_{t = 1}^T t\langle F(x_t), x_t - x^* \rangle &\leq \frac{4}{T(T + 1)} \cdot \left( \sum\limits_{t = 1}^Tt\langle \hat{E}(x_t), x_t - x^* \rangle + \frac{T \cdot (L + 1)^2}{\mu} \right)
\end{align*}
By monotonicity we get note that: $\mu\|x_t - x^* \|^2 \leq \langle F(x_t), x_t - x^* \rangle$ and thus:
\begin{align*}
    \frac{\mu}{T(T + 1)/2}\sum\limits_{t = 1}^T t \|x_t - x^* \|^2 &\leq \frac{4}{T(T + 1)} \cdot \left( \sum\limits_{t = 1}^Tt\langle \hat{E}(x_t), x_t - x^* \rangle + \frac{T \cdot (L + 1)^2}{\mu} \right)\\
    \frac{1}{ T(T + 1)/2}\sum\limits_{t = 1}^T t \|x_t - x^* \|^2 &\leq \frac{4}{\mu T(T + 1)} \cdot \left( \sum\limits_{t = 1}^Tt\langle \hat{E}(x_t), x_t - x^* \rangle + \frac{T \cdot (L + 1)^2}{\mu} \right)\\
     \| \sum\limits_{t = 1}^T \frac{t}{ T(T + 1)/2} x_t - x^* \|^2 &\leq \frac{4}{\mu T(T + 1)} \cdot \left( \sum\limits_{t = 1}^Tt\langle \hat{E}(x_t), x_t - x^* \rangle + \frac{T \cdot (L + 1)^2}{\mu} \right) \\
      \| \bar{x} - x^* \|^2 &\leq \frac{4}{\mu T(T + 1)} \cdot \left( \sum\limits_{t = 1}^Tt\langle \hat{E}(x_t), x_t - x^* \rangle + \frac{T \cdot (L + 1)^2}{\mu} \right) \\
      |f_i(\bar{x}) - f_i(x^*)| &\leq \frac{4L}{\mu T(T + 1)} \cdot \left( \underbrace{\sum\limits_{t = 1}^Tt\langle \hat{E}(x_t), x_t - x^* \rangle}_{E_T} + \frac{T \cdot (L + 1)^2}{\mu} \right) 
\end{align*}
To complete the proof, we rely on a high probability bound on $E_T$ which makes use of a specialized form of the Generalized Freedman's Inequality. 
\begin{lemma}[\cite{harvey_simple_2019} Lemma 4.1] Let $E_T = \sum\limits_{t = 1}^Tt\langle \hat{E}(x_t), x_t - x^* \rangle$. Then for any $\delta \in (0, 1)$ we have that $E_T \leq \mathcal{O}\left(\frac{L}{\mu}\cdot T\log(\frac{1}{\delta})\right)$
\end{lemma}
plugging in the bound on $E_T$ completes the proof.
\end{proof}

We then proceed to provide a proof of the successive elimination protocol. This proof which closely follows \cite{jamieson_lecture_nodate}
\begin{proposition} [Restatement of Proposition \ref{prop:successiveelimination}]
    Under the assumptions of Proposition~\ref{prop:grad_descent}, let $\mathcal{A} = \{ \Theta_i\}_{i = 1}^{n}$. With probability at least $1 - \delta$, Algorithm \ref{SuccessiveElimination} identifies the model class whose Nash equilibrium yields the highest payoff after: \[\mathcal{O}\left(\frac{n(L^2\log(\frac{n}{\delta}) + L^3)}{\mu^2\Delta^*}\right)\] interactions with the environment, where $\Delta^*$ is the minimum suboptimality gap of the Nash equilibrium of a function class compared to that of the best function class.
\end{proposition}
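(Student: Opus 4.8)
The plan is to run the standard successive-elimination (best-arm identification) analysis, treating each model class $\Theta_i$ as an ``arm'' whose unknown value is the learner's payoff $f_l(x_i^*)$ at the unique Nash equilibrium $x_i^*$ induced by $\Theta_i$, and whose per-round estimate is $f(x^i_T)$ produced by Algorithm~\ref{Grad_descent}. The crucial input is Proposition~\ref{prop:grad_descent}: after a budget of $T$ gradient steps at confidence level $\delta'$, the estimate satisfies $|f(x^i_T) - f_l(x_i^*)| \le C_T(\delta')$ with probability at least $1-\delta'$, where $C_T(\delta') = \mathcal{O}\!\left(\frac{L^2\log(1/\delta') + L^3}{\mu^2 T}\right)$ is exactly the half-width used in the elimination test of Algorithm~\ref{SuccessiveElimination}. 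Unlike standard stochastic bandits, the strong-monotonicity rate gives a $1/T$ (rather than $1/\sqrt{T}$) shrinkage of the confidence interval, which is what ultimately produces a $1/\Delta^*$ rather than $1/(\Delta^*)^2$ dependence.

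First I would establish a single ``good event'' on which every confidence interval used by the algorithm is valid. At round $\tau$ the algorithm sets $T = \alpha 2^\tau$ and $\delta' = \delta/(2nT^2)$, and calls Algorithm~\ref{Grad_descent} once per surviving arm. A union bound over the at most $n$ arms in round $\tau$, followed by a sum over all rounds, gives total failure probability at most $\sum_{\tau \ge 1} n\cdot \frac{\delta}{2nT^2} = \frac{\delta}{2}\sum_{\tau\ge 1} (\alpha 2^\tau)^{-2}$, a convergent geometric series bounded by $\delta$ for the prescribed constant $\alpha$. Hence with probability at least $1-\delta$ the true value of every arm lies within its stated confidence interval in every round.

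Next I would argue correctness on the good event. Since $f_l(x_i^*)$ lies in $[\,f(x^i_T)-C_T,\ f(x^i_T)+C_T\,]$ for every surviving arm, the optimistic estimate of the best arm $i^\star$ never falls below the pessimistic estimate of any other arm, so $i^\star$ is never eliminated; conversely, the elimination test can only discard genuinely suboptimal arms. For the sample complexity, a suboptimal arm $i$ with gap $\Delta_i := f_l(x_{i^\star}^*) - f_l(x_i^*)$ must be eliminated once its interval separates from that of $i^\star$, i.e. once $4C_T(\delta') < \Delta_i$, which holds for $T \gtrsim \frac{L^2\log(n/\delta) + L^3}{\mu^2\Delta_i}$. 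Because the budget doubles each round, the total interactions spent on arm $i$ across all rounds in which it survives is within a constant factor of this elimination threshold $T_i$; summing over arms and using $\Delta_i \ge \Delta^*$ to bound $\sum_i T_i \le n\cdot\max_i T_i = \mathcal{O}\!\left(\frac{n(L^2\log(n/\delta)+L^3)}{\mu^2\Delta^*}\right)$ yields the claim.

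The main technical obstacle is that the confidence level $\delta'=\delta/(2nT^2)$ depends on the budget $T$, so $C_T(\delta')$ contains a term $\log(1/\delta') = \log(2n/\delta) + 2\log T$ that makes the elimination threshold $T_i \approx \frac{L^2(\log(n/\delta)+\log T_i)+L^3}{\mu^2\Delta_i}$ self-referential. Resolving this cleanly---showing that the $\log T_i$ factor only inflates $T_i$ by a constant/polylog factor absorbed into the stated $\log(n/\delta)$---together with the careful geometric-series bookkeeping for both the union bound and the doubling schedule, is where the real care is needed; the remainder follows the template of \cite{jamieson_lecture_nodate}.
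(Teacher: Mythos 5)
Your proposal is correct and follows essentially the same route as the paper's proof: an anytime confidence bound obtained by union-bounding the guarantee of Proposition~\ref{prop:grad_descent} over arms and budgets, a lemma showing the best class is never eliminated on the good event, and a gap-based elimination threshold ($\Delta_i > 4U(T,\delta)$) combined with the doubling-schedule bookkeeping and a sum over the $n$ classes, all following the template of~\cite{jamieson_lecture_nodate}. The only cosmetic differences are that the paper's union bound runs over all integer budgets $T$ (using $\sum_{T\ge 1} T^{-2} \le 2$) rather than only the doubling-schedule values, and it handles the self-referential $\log T$ term you flag by simply dropping it as a nonnegative lower-order term in the threshold inequality.
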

\begin{proof}[Proof of \ref{prop:successiveelimination}]
We begin by showing an ``anytime" confidence interval bound.  
\begin{lemma}
    Let $X_{i, T} = f(x^i_T)$ where $x^i_T \gets$ Algorithm $\ref{Grad_descent}(\Theta_i, x_0, T)$ and $X_i^* = f(x_i^*)$ where $x_i^*$ is the Nash equilibrium point for $\Theta_i$. We then have that: $\mathbb{P}\left(\bigcup\limits_{i = 1}^{n} \left\{ \bigcup\limits_{T = 1}^{\infty} \left\{ \left| X_{i, T}  - X^*_i \right| \geq \frac{L^2\log(\frac{2T^2n}{\delta}) + L^3}{\mu^2T} \right\} \right\} \right) \leq \delta$
\end{lemma}
\begin{proof}
    Here we rely on the union bound to note that:
    \begin{align*}
        \mathbb{P}\left(\bigcup\limits_{i = 1}^{n} \left\{ \bigcup\limits_{T = 1}^{\infty} \left\{ \left| X_{i, T}  - X^*_i \right| \geq \frac{L^2\log(2\frac{T^2n}{\delta}) + L^3}{\mu^2T} \right\} \right\} \right) &\leq \sum\limits_{i = 1}^{n}\sum\limits_{T = 1}^{\infty} \mathbb{P}\left(  \left| X_{i, T}  - X^*_i \right| \geq \frac{L^2\log(\frac{2T^2n}{\delta}) + L^3}{\mu^2T} \right) \\
        &\leq \sum\limits_{i = 1}^{n} \frac{ \delta}{2 n} \sum\limits_{T = 1}^{\infty}  \frac{1}{T^2} \\      
        &\leq \delta
    \end{align*}
\end{proof}

\begin{lemma}
    With probability greater than or equal to $1 - \delta$, the best model class $\Theta_k$, is retained in the active set $S$ until the end of Algorithm \ref{SuccessiveElimination}. 
\end{lemma}
\begin{proof}
    Let $\mathcal{D}$ be the event $\bigcup\limits_{i = 1}^{n} \left\{ \bigcup\limits_{T = 1}^{\infty} \left\{ \left| X_{i, T}  - X^*_i \right| \geq \frac{L^2\log(2\frac{T^2n}{\delta}) + L^3}{\mu^2T} \right\} \right\}$. We know that $\mathcal{D}^C$ occurs with probability at least $1 - \delta$. Let $U(T, \delta) = \frac{L^2\log(2\frac{T^2n}{\delta}) + L^3}{\mu^2T}$, $\Theta_k$ is dropped if there exists $j, T$ such that $X_{j, T} - U(T, \delta) > X_{k, T} + U(T, \delta)$. We consider the scenario where the event $\mathcal{D}^C$ occurs. In this scenario we have that $X^*_j + U(T, \delta) \geq X_{j, T}$ and that $ X_{k, T} \geq X^*_k - U(T, \delta)$ for all $T$. Plugging these two inequalities into the first expression gives us that $X^*_j \geq X^*_k$ which is a contradiction. Therefore, with probability greater than or equal to $1 - \delta$ we have the best model class $\Theta_i$ remaining in $S$. 
\end{proof}
\begin{lemma}
Given that the best model class $\Theta_k$ is identified by Algorithm \ref{SuccessiveElimination}, it will terminate after $\mathcal{O}\left(\frac{n(L^2\log(\frac{n}{\delta}) + L^3)}{\mu^2\Delta^*}\right)$ samples 
\end{lemma}
\begin{proof}
    Let $\Delta_i = X_k^* - X_i^*$. Let $\Delta^* = \min_i \Delta_i$. Let $\Theta_k$ be the action that corresponds to the highest payoff at the Nash equilibrium point. \\
    We note that one of the conditions which leads to action $\Theta_i$ being removed from the consideration set $S$ is 
    \begin{align}
    \label{confidence}
        X_{k, T} - U(T, \delta) \geq X_{i, T} + U(T, \delta)
    \end{align}
    Assuming that the event $\mathcal{D}^C$ holds, for each model class $\Theta_i$ we have that, $X_{k, T} \geq X^*_k - U(t, \delta)$ and that $X_{i, t} \leq X^*_i + U(t, \delta)$. Substituting these expressions into what we have by \ref{confidence}, we get that :
    \begin{align*}
        X_k^* - X_i^* &\geq 2U(T, \delta) + 2U(T, \delta) \\
        \Delta_i &\geq  4U(T, \delta)
    \end{align*}
Now we consider the case where $\Delta_i = \Delta^*$ to find a bound for $T$. 
\begin{align*}
    \Delta^* &\geq  4U(T, \delta) \\
    \Delta^* &\geq  \frac{4 (L^2\log(\frac{2T^2n}{\delta}) + L^3)}{\mu^2T} \\
    \mu^2T - \frac{8L^2\log(T)}{\Delta^*} &\geq \frac{4(L^2\log(\frac{2n}{\delta}) + L^3)}{\Delta^*}\\
    \mu^2T  &\geq \frac{4(L^2\log(\frac{2n}{\delta}) + L^3)}{\Delta^*} \\
    T &\geq \mathcal{O}\left(\frac{L^2\log(\frac{n}{\delta}) + L^3}{\mu^2\Delta^*}\right)
\end{align*}
From here we proceed to find the $\tau$ which corresponds to $\mathcal{O}\left(\frac{L^2\log(\frac{n}{\delta}) + L^3}{\mu^2\Delta^*}\right)$ steps which is simply found by taking the log. We then note that $\sum\limits_{\tau = 1}^{\mathcal{O}\left(\log(\frac{L^2\log(\frac{n}{\delta}) + L^3}{\mu^2\Delta^*})\right)} 2^{\tau} = \mathcal{O}\left(\frac{L^2\log(\frac{n}{\delta}) + L^3}{\mu^2\Delta^*}\right)$. Summing over $n-1$ decision actions we get $\mathcal{O}\left(\frac{n(L^2\log(\frac{n}{\delta}) + L^3)}{\mu^2\Delta^*}\right)$ samples which completes the proof.
\end{proof}
\end{proof}

\section{Additional calculations for Linear Regression Example}
\label{proof_regression}
Consider the following setup. The decision-maker would like to solve a regression problem and has the choice of two different regression models. For a distribution of input data $\mathcal{D}$, and a datapoint $x$, they can either compute:
$$\phi^1_\theta(x)=\theta^T (x + e)$$
Or: 
$$\phi^2_\theta(x)=\theta_1^T(x + e) +\theta_2 \exp(-\|x+e\|^2)$$

For the rest of the calculation we take the distribution of the input data to be $\mathcal{N}(0, I)$. Suppose the true relationship between features $x$ and outcomes $y$ is linear and is by $y=\beta^Tx$. However, the training data is reported by a population of strategic agents who commit to all manipulating their features in the same way such that the reported features are given by $x'=x+e$. Equivalently, this can be seen as the input data being misreported and generated from a distribution $\mathcal{N}(e,I)$. 

Suppose the population of strategic agents knows that the learner will solve a regression problem. Then, they would like to choose $e$ to maximize their expected prediction given. Concretely:
$$ e^*=\arg\max_e \,  \mathbb{E}[\phi^i_{\theta^*}(x+e)]$$
Given the fact that $\theta^*=\argmin_\theta \mathbb{E}[(y-\phi_{\theta}(x+e))^2]$. For this example, the set $C =\{e \in \mathbb{R}^d: e = k\frac{\beta}{\|\beta\|} \text{ for }k \in [-10, 10] \}$. We select this direction because, at a high level, the best deviation for the environment can be shown for many model classes to be in the direction of $\beta$.  As for the magnitude boundaries, the equilibrium points we found lie in the interior of the set, and hence, there was nothing special about the boundaries selected for this example.   

\paragraph{Case 1: Small model} 
For this model, we do not rely explicitly on the definition of $C$. We find that the Stackelberg equilibrium action for the environment over $\mathbb{R}^d$ already lies in $C$. As such, this calculation does not make use of the structure of $C$. 

To begin, we compute the the optimal $\theta$ for a given $e$ when $\phi_\theta(x)=\theta^T( x + e)$:
\begin{align*}
    \nabla_\theta f_l(e,\theta)
    &= \nabla_\theta \mathbb{E}[(y-\phi_{\theta}(x))^2]\\
    &= \nabla_\theta \mathbb{E}[(\beta^Tx-\theta^T(x+e))^2]\\
    &=2\beta-2\theta-2ee^T\theta
\end{align*}

Setting this equal to 0, we find that:
$$ \theta^*(e)=\left(I-\frac{ee^T}{1+\|e\|^2}\right)\beta$$
plugging this into the problem for the strategic agents, we find that:
\begin{align*}
    e^*&=\arg\max_e \,  \mathbb{E}[\phi_{\theta^*}(x+e)]\\
    &= \arg\max_e \,  \theta^*(e)^Te\\
    &= \arg\max_e \,  \frac{ e^T\beta}{1+\|e\|^2}
\end{align*}

This results in the optimal choice of $e$ for the population of strategic agents being $e^*=\frac{\beta}{\|\beta\|}$, which in turn results in the regression accuracy of the decision-maker being:
$$ f_l(e^*,\theta^*(e^*))=\frac{1}{2}\|\beta\|^2$$

\paragraph{Case 2: Larger model}
In this case, while we make use of the structure of $C$, numerical experiments suggest that this point may be an equilibrium point over a far larger set than $C$. As this was an illustrative example, we did not venture to formally prove that the point we found was a Stackelberg equilibrium point across $\mathbb{R}^d$. 
For the second case, let us first expand the loss for the decision-maker as:
$$ f_l(e,\theta)=\mathbb{E}[(\beta^Tx-\theta_1^T(x+e))^2]-2\theta_2\mathbb{E}[\exp(-\|x+e\|^2)(\beta^Tx-\theta_1^T(x+e))]+\theta_2^2\mathbb{E}[\exp(-2\|x+e\|^2)]$$ 
We first evaluate $-2\theta_2\mathbb{E}[\exp(-\|x+e\|^2)(\beta^Tx-\theta_1^T(x+e))]$. Note that $-2\theta_2\mathbb{E}[\exp(-\|x+e\|^2)(\beta^Tx-\theta_1^T(x+e))] = -2\theta_2\beta^T\mathbb{E}[\exp(-\|x+e\|^2)x] + 2\theta_2\theta_1^T\mathbb{E}[\exp(-\|x+e\|^2)x] + 2\theta_2\theta_1^T\mathbb{E}[\exp(-\|x+e\|^2)e]$
\begin{align*}
\mathbb{E}[\exp(-\|x+e\|^2)x] &= (\frac{1}{2\pi})^{\frac{d}{2}}\int_{\mathbb{R}^d} x  \exp(-\|x+e\|^2)\exp(-\frac{1}{2}\|x\|^2) \, dx\ \\
&= (\frac{1}{2\pi})^{\frac{d}{2}}\exp(-\|e\|^2)\int_{\mathbb{R}^d}x \exp(- \frac{3}{2}\|x\|^2 - 2x^Te) \, dx\\
&= (\frac{1}{2\pi})^{\frac{d}{2}}\exp(-\frac{1}{3}\|e\|^2)\int_{\mathbb{R}^d}x \exp(- \frac{3}{2}\|x + \frac{2}{3}e\|^2) \, dx\\
&= (\frac{1}{3})^{\frac{d}{2}}\exp(-\frac{1}{3}\|e\|^2)\int_{\mathbb{R}^d}x \cdot \mathcal{N}(-\frac{2}{3}e, \frac{1}{3}I) \, dx\\
&= (\frac{1}{3})^{\frac{d}{2}}\exp(-\frac{1}{3}\|e\|^2)\int_{\mathbb{R}^d}x \cdot \mathcal{N}(-\frac{2}{3}e, \frac{1}{3}I) \, dx\\
&= (\frac{1}{3})^{\frac{d}{2}}\exp(-\frac{1}{3}\|e\|^2) \cdot -\frac{2}{3}e \\
&= -2(\frac{1}{3})^{\frac{d}{2} + 1}\exp(-\frac{1}{3} \|e\|^2)e 
\end{align*}
Additionally we consider $\mathbb{E}[\exp(-\|x+e\|^2)e]$
\begin{align*}
\mathbb{E}[\exp(-\|x+e\|^2)e] &= (\frac{1}{2\pi})^{\frac{d}{2}}\int_{\mathbb{R}^d} e  \exp(-\|x+e\|^2)\exp(-\frac{1}{2}\|x\|^2) \, dx\ \\
&= (\frac{1}{2\pi})^{\frac{d}{2}}\exp(-\|e\|^2)\int_{\mathbb{R}^d} e \exp(- \frac{3}{2}\|x\|^2 - 2x^Te) \, dx\\
&= (\frac{1}{2\pi})^{\frac{d}{2}}\exp(-\frac{1}{3}\|e\|^2)\int_{\mathbb{R}^d} e \exp(- \frac{3}{2}\|x + \frac{2}{3}e\|^2) \, dx\\
&= (\frac{1}{3})^{\frac{d}{2}}\exp(-\frac{1}{3}\|e\|^2)\int_{\mathbb{R}^d} e \cdot \mathcal{N}(-\frac{2}{3}e, \frac{1}{3}I) \, dx\\
&= (\frac{1}{3})^{\frac{d}{2}}\exp(-\frac{1}{3}\|e\|^2)\int_{\mathbb{R}^d} e \cdot \mathcal{N}(-\frac{2}{3}e, \frac{1}{3}I) \, dx\\
&= (\frac{1}{3})^{\frac{d}{2}}\exp(-\frac{1}{3}\|e\|^2) \cdot e \\
&= (\frac{1}{3})^{\frac{d}{2}}\exp(-\frac{1}{3} \|e\|^2)e 
\end{align*}
Putting everything together, we get the following: 
\begin{align*}
&-2\theta_2\mathbb{E}[\exp(-\|x+e\|^2)(\beta^Tx-\theta_1^T(x+e))] \\ &=-2\theta_2\beta^T\mathbb{E}[\exp(-\|x+e\|^2)x] + 2\theta_2\theta_1^T\mathbb{E}[\exp(-\|x+e\|^2)x] + 2\theta_2\theta_1^T\mathbb{E}[\exp(-\|x+e\|^2)e] \\
&= -2\theta_2\beta^T(-2(\frac{1}{3})^{\frac{d}{2} + 1}\exp(-\frac{1}{3} \|e\|^2)e) + 2\theta_2\theta_1^T(-2(\frac{1}{3})^{\frac{d}{2} + 1}\exp(-\frac{1}{3} \|e\|^2)e) + 2\theta_2\theta_1^T((\frac{1}{3})^{\frac{d}{2}}\exp(-\frac{1}{3} \|e\|^2)e) \\
&= 4\theta_2\beta^T((\frac{1}{3})^{\frac{d}{2} + 1}\exp(-\frac{1}{3} \|e\|^2)e) -4\theta_2\theta_1^T((\frac{1}{3})^{\frac{d}{2} + 1}\exp(-\frac{1}{3} \|e\|^2)e) + 2\theta_2\theta_1^T((\frac{1}{3})^{\frac{d}{2}}\exp(-\frac{1}{3} \|e\|^2)e) \\
&= 2\theta_2((\frac{1}{3})^{\frac{d}{2} + 1}\exp(-\frac{1}{3}\|e\|^2)(\theta_1^Te + 2\beta^Te))
\end{align*}
We then go on to evaluate $\theta_2^2\mathbb{E}[\exp(-2\|x+e\|^2)]$ using the same calculation method as above, and find that  $\theta_2^2\mathbb{E}[\exp(-2\|x+e\|^2)] = \theta_2^2((\frac{1}{5})^{\frac{d}{2}}\exp(-\frac{2}{5}\|e\|^2))$ \\
Putting everything together we find that the loss of the model is: $$ f_l(e,\theta)= \beta^T\beta - 2\beta^T\theta_1 + \theta_1^T\theta_1 + \theta_1^Tee^T\theta_1 + 2\theta_2((\frac{1}{3})^{\frac{d}{2} + 1}\exp(-\frac{1}{3}\|e\|^2)(\theta_1^Te + 2\beta^Te)) + \theta_2^2((\frac{1}{5})^{\frac{d}{2}}\exp(-\frac{2}{5}\|e\|^2)) .$$
We take the derivative with respect to $\theta_1$ and we find that the loss' derivative is:
$$ -2\beta + 2\theta_1 + 2ee^T\theta_1 + 2\theta_2((\frac{1}{3})^{\frac{d}{2} + 1}\exp(-\frac{1}{3} \|e\|^2)e $$
Solving for $\theta_1$ after equating the derivative to zero, we find that $\theta_1$ is 
$$ \theta_1 = (I - \frac{ee^T}{1 + \|e\|^2})(\beta - \theta_2((\frac{1}{3})^{\frac{d}{2} + 1}\exp(-\frac{1}{3}\|e\|^2)e)$$
Similarly, we take the derivative with respect to $\theta_2$ and we find it to be:
$$ 2((\frac{1}{3})^{\frac{d}{2} + 1}\exp(-\frac{1}{3}\|e\|^2)(\theta_1^Te + 2\beta^Te)) + 2\theta_2((\frac{1}{5})^{\frac{d}{2}}\exp(-\frac{2}{5}\|e\|^2))$$
setting the derivative to zero and solving for $\theta_2$ we find that $\theta_2$ is:
$$ \theta_2 = \frac{-(\frac{1}{3})^{\frac{d}{2} + 1}\exp(-\frac{1}{3}\|e\|^2)(\theta_1^Te + 2\beta^Te)}{(\frac{1}{5})^{\frac{d}{2}}\exp(-\frac{2}{5}\|e\|^2)}$$
From this point on we make the assumption that $e$ is in the direction of $\beta$ (i.e., $e = k\frac{\beta}{\|\beta\|}$) for some $k \in \mathbb{R}$. As such, note that $\|e\| = k.$ For simplification and ease of computation, we make the following notational substitutions:
\begin{align*}
(\frac{1}{3})^{\frac{d}{2} + 1}\exp(-\frac{1}{3}k^2) &= m \\
(\frac{1}{5})^{\frac{d}{2}}\exp(-\frac{2}{5}k^2) &= y 
\end{align*}
Re-writing the expressions of $\theta_1$ and $\theta_2$ we get:
\begin{align*}
\theta_1 &= (I - \frac{ee^T}{1 + \|e\|^2})(\beta - \theta_2((\frac{1}{3})^{\frac{d}{2} + 1}\exp(-\frac{1}{3}\|e\|^2)e) \\
&= (I - \frac{ee^T}{1 + \|e\|^2})(\beta - \theta_2me) \\
\theta_2 &= \frac{-m(\theta_1^Te + 2\beta^Te)}{y}
\end{align*}
We go on to simplify the expressions for $\theta_1$ and $\theta_2$:
\begin{align*}
\theta_1 &= (I - \frac{ee^T}{1 + \|e\|^2})(\beta - \theta_2me) \\
&= (I - \frac{ee^T}{1 + \|e\|^2})(\beta + \frac{m(\theta_1^Te + 2\beta^Te)}{y} m e) \\
&= (I - \frac{ee^T}{1 + \|e\|^2})(\beta + \frac{m^2(\theta_1^Te + 2\beta^Te)}{y}e) \\
\theta_1 - (I - \frac{ee^T}{1 + \|e\|^2})\frac{m^2(\theta_1^Te + 2\beta^Te)}{y}e  &= (I - \frac{ee^T}{1 + \|e\|^2})\beta\\
\theta_1 - (I - \frac{ee^T}{1 + \|e\|^2})\frac{m^2\theta_1^Te}{y}e  &= (I - \frac{ee^T}{1 + \|e\|^2})\beta + (I - \frac{ee^T}{1 + \|e\|^2})\frac{2m^2\beta^Te}{y}e \\
\theta_1 - (\frac{1}{1 + \|e\|^2})\frac{e m^2\theta_1^Te}{y}  &= (I - \frac{ee^T}{1 + \|e\|^2})\beta + (\frac{1}{1 + \|e\|^2})\frac{2e m^2\beta^Te}{y}
\end{align*}
We let $z = -(\frac{1}{1 + k^2})\frac{m^2}{y}$ and realize that:
\begin{align*}
(I + zee^T)\theta_1 &= (I - \frac{ee^T}{1 + \|e\|^2})\beta + (\frac{1}{1 + \|e\|^2})\frac{2e m^2\beta^Te}{y} \\
(I + zee^T)\theta_1 &= (I - \frac{ee^T}{1 + \|e\|^2})\beta + 2\frac{m^2}{y}\frac{ee^T}{1 + \|e\|^2}\beta \\
(I + zee^T)\theta_1 &= \beta - \frac{ee^T}{1 + \|e\|^2}\beta + 2\frac{m^2}{y}\frac{ee^T}{1 + \|e\|^2}\beta
\end{align*}
We make use of the substitution $e = k \frac{\beta}{\|\beta\|}$
\begin{align*}
(I + zee^T)\theta_1 &= \beta - \frac{k^2}{1 + k^2}\beta + 2\frac{m^2}{y}\frac{k^2}{1 + k^2}\beta \\
(I + zee^T)\theta_1 &= \beta(\frac{1}{1 + k^2})( 1 + 2\frac{m^2}{y}k^2)
\end{align*}
We invert the left side using the Sherman Morrison formula:
\begin{align*}
\theta_1 &= (I - \frac{zee^T}{1 + z\|e\|^2})\beta(\frac{1}{1 + k^2})( 1 + 2\frac{m^2}{y}k^2) \\
\theta_1 &= (I - \frac{zee^T}{1 + z\|e\|^2})\beta(\frac{1}{1 + k^2})( 1 + 2\frac{m^2}{y}k^2) \\
\theta_1 &= \beta(\frac{1}{1 + zk^2})(\frac{1}{1 + k^2})( 1 + 2\frac{m^2}{y}k^2)
\end{align*}
We simplify this by letting $(\frac{1}{1 + zk^2})(\frac{1}{1 + k^2})( 1 + 2\frac{m^2}{y}k^2) = c$ and thus $\theta_1 = \beta c$. We now substitute this expression back to find the expression of $\theta_2$:
\begin{align*}
\theta_2 &= \frac{-m(\theta_1^Te + 2\beta^Te)}{y} \\
\theta_2 &= \frac{-m(c\beta^Te + 2\beta^Te)}{y} \\
\theta_2 &= \frac{-m}{y}(c\beta^Te + 2\beta^Te) \\
\theta_2 &= \frac{-m}{y}(\beta^Te)(2 + c) \\
\theta_2 &= \frac{-m}{y}k\|\beta\|(2 + c)
\end{align*}
We simplify the expression for $\theta_2$ as well by noting that $\theta_2 = p \|\beta\|$ where $p = \frac{-m}{y}k(2 + c)$
We now calculate the loss of the strategic agent:
\begin{align*}
f_e(\theta,e) &= \mathbb{E}[\theta_1^T(x + e) + \theta_2\exp(-\|x + e\|^2)]\\
&= \theta_1^Te + \theta_2(\frac{1}{3})^{\frac{d}{2}}\exp(-\frac{1}{3} \|e\|^2) \\
&= \theta_1^Te + \theta_23(\frac{1}{3})^{\frac{d}{2} + 1}\exp(-\frac{1}{3} \|e\|^2) \\
&= \theta_1^Te + \theta_23m \\
&= c\beta^Te +  3mp\|\beta\| \\
&= ck\|\beta\| +  3mp\|\beta\|\\
&= \|\beta\| (ck + 3mp)
\end{align*}
We then now re-evaluate the loss of the model player in terms of the simplified expressions we have found. 
\begin{align*}
f_l(e,\theta) &= \beta^T\beta - 2\beta^T\theta_1 + \theta_1^T\theta_1 + \theta_1^Tee^T\theta_1 + \\
&\quad 2\theta_2((\frac{1}{3})^{\frac{d}{2} + 1}\exp(-\frac{1}{3}\|e\|^2)(\theta_1^Te + 2\beta^Te)) + \theta_2^2((\frac{1}{5})^{\frac{d}{2}}\exp(-\frac{2}{5}\|e\|^2)) \\
&= \beta^T\beta - 2\beta^T\theta_1 + \theta_1^T\theta_1 + \theta_1^Tee^T\theta_1 + 2\theta_2(m(\theta_1^Te + 2k\|\beta\|)) + \theta_2^2y \\
&= \|\beta\|^2 - 2c\beta^T\beta + c^2\beta^T\beta + c^2\beta^Tee^T\beta + 2p\|\beta\|(m(c\beta^Te + 2k\|\beta\|)) + p^2\|\beta\|^2y\\
&= \|\beta\|^2 - 2c\|\beta\|^2  + c^2\|\beta\|^2  + c^2k^2\|\beta\|^2  + 2pmck\|\beta\|^2  + 4pmk\|\beta\|^2  + p^2y\|\beta\|^2\\
&= (1 - 2c  + c^2  + c^2k^2  + 2pmck  + 4pmk  + p^2y)\|\beta\|^2\\
\end{align*}
We assume that $d = 2$ and we consider the loss for the model player with varying values of $k$. Optimizing over this, we see that in the small model setting, the agent is incentivized to use the value of $k = 1$, which corresponds to $\frac{\beta}{\|\beta\|}$. This then gives the model a loss of $\frac{1}{2}\|\beta\|^2$. However, in the larger model case, the agent is incentivized to give a value of $k$ of $\approx 3.4$. This results in a higher model loss of $\approx 0.78 \|\beta\|^2$. Figure \ref{fig:linear_regression} shows the learner plots. 

\section{Further details on the Multi-Agent RL Example}
\label{app:marl}
Our procedure for constructing the Markov follows from a couple of foundational principles. Given that we are in a two-player game with players $A$ and $B$, we make payoff matrices that make one action for player $A$ the dominant strategy across all states (e.g., our example in \ref{sec:theory}. We choose the action 0). It is important to note that though a particular strategy is dominant across states, it does not mean that player $A$ will have the same payoff across all these states. We then make all the transitions entirely independent of this player $A$'s actions. With this, we then design the payoff matrices for player $B$ to be such that depending on how much weight the player $A$ puts on action 0 ($p$), they are incentivized to move to another state. 

To do this concretely, we first instantiate a number of states and corresponding thresholds for which the player $B$ would be incentivized to transition from one state to the next. We then use Nash $Q$ learning \cite{nash_q_learning} to find what values of player $B$'s payoff matrix would result in behavior that is such that the Nash policy for player $B$ below some threshold has them preferring, for example, moving to the next state but above this threshold they would prefer staying in the current state.

\newpage

\end{document}